\DeclareMathOperator{\dom}{\mathrm{dom}}
\newcommand{\Diag}[1]{\mathrm{Diag}\left(#1\right)}
\newcommand{\diag}[1]{\mathrm{diag}\left(#1\right)}
\newcommand{\conv}{\mathrm{conv}}
\newcommand{\clco}{\overline{\conv}}
\newcommand{\gammax}{{\gam_{\mbox{\tiny max}}}}
\def\prox{{\mathrm{prox}}}
\DeclareMathOperator*{\argmin}{\mathrm{argmin}}
\newcommand{\intr}[1]{\mathrm{int}\left( #1\right)}
\DeclareMathOperator{\B}{\mathbb{B}}
\DeclareMathOperator{\R}{\mathbb{R}}
\DeclareMathOperator{\N}{\mathbb{N}}
\DeclareMathOperator{\bS}{\mathbb{S}}
\DeclareMathOperator{\NN}{\mathcal{N}}
\def\CC{\mathcal{C}}
\DeclareMathOperator{\LL}{\mathcal{L}}
\def\SS{\mathcal{S}}
\def\hSS{{\widehat\SS}}
\DeclareMathOperator{\DD}{\mathcal{D}}
\def\uball{\mathbb{B}}
\newcommand{\tbeta}{\tilde\beta}
\newcommand{\tgamma}{\tilde\gamma}
\newcommand{\tr}{\tilde{r}}
\newcommand{\hbeta}{\hat{\beta}}
\newcommand{\bgam}{\bar{\gamma}}
\newcommand{\half}{\mbox{\small$\frac{1}{2}$}}
\newcommand\reals{{{\rm l} \kern -.15em {\rm R} }}
\DeclareMathOperator{\epi}{epi}
\newenvironment{mat}{\left[\begin{array}{ccccccccccccccc}}{\end{array}\right]}
\newcommand\bcm{\begin{mat}}
\newcommand\ecm{\end{mat}}
\newenvironment{rmat}{\left[\begin{array}{rrrrrrrrrrrrr}}{\end{array}\right]}
\newcommand\brm{\begin{rmat}}
\newcommand\erm{\end{rmat}}
\newcommand{\eq}[1]{\begin{equation}\begin{split}#1\end{split}\end{equation}}
\newtheorem{theorem}{Theorem}
\newtheorem{lemma}[theorem]{Lemma}
\def\bW{{\mathbb{W}}}
\def\eR{{\overline\R}}
\def\Rp{{\R^p}}
\def\Rq{{\R^q}}
\def\hR{{\widehat R}}
\def\tR{{\widetilde R}}
\def\bv{{\bar v}}
\def\tw{{\tilde w}}
\def\hZ{{\widehat Z}}
\def\one{\mathbf{1}}
\def\lam{\lambda}
\def\blam{{\bar\eta}}
\def\Lam{\Lambda}
\def\alf{\alpha}
\def\talf{{\tilde\alpha}}
\def\tkappa{{\tilde\kappa}}
\def\balf{{\bar\alpha}}
\def\bgam{{\bar\gamma}}
\def\bbeta{{\bar\beta}}
\def\bbeta{{\bar\beta}}
\def\bxi{{\bar\xi}}
\def\bw{{\bar w}}
\def\hw{{\hat w}}
\def\bx{{\bar x}}
\def\bv{{\bar v}}
\def\hv{{\hat v}}
\def\hx{{\hat x}}
\def\tv{{\tilde v}}
\def\gam{\gamma}
\def\hgam{{\hat\gam}}
\def\tgam{{\tilde\gam}}
\def\bgam{{\bar\gam}}
\def\Gam{\Gamma}
\def\bomega{{\bar\omega}}
\def\del{\delta}
\def\eps{\epsilon}
\def\beps{{\bar\epsilon}}
\def\sig{\sigma}
\def\VV{{\mathcal{V}}}
\def\tLL{{\widetilde\LL}}
\def\Dfr{\mathfrak{D}}
\def\hOmega{{\widehat\Omega}}
\def\mmax{\text{max}}
\def\mmin{\text{min}}
\newcommand\map[3]{#1:#2\rightarrow #3}
\newcommand{\lset}[2]{\left\{#1\,\left|\,#2\right.\right\}}
\newcommand{\norm}[1]{\left\Vert #1\right\Vert}
\newcommand{\im}[1]{{\mathrm{Im}\left(#1\right)}}
\newcommand{\ip}[2]{\left\langle #1,\, #2\right\rangle}
\def\bgk{(\beta^k,\gam^k)}
\def\tbgk{(\tbeta^k,\tgam^k)}
\def\bg{(\beta,\gam)}
\def\tbg{(\tbeta,\tgam)}
\def\bbg{(\bar\beta,\bar\gam)}
\def\hbg{(\hat\beta,\hat\gam)}
\def\emu{{\eta,\mu}}
\def\phimu{\phi_\mu}
\def\uemu{{u_{\eta,\mu}}}
\def\keta{\kappa_\eta}
\def\Rqp{\R^q_+}
\def\vphi{\varphi}
\def\vmu{v_\mu}
\def\xmu{x_\mu}
\begin{document}
\title[Feature Selection for Linear Mixed Effects]{Analysis of Relaxation Methods for Feature Selection in Mixed Effects Models}
\author{A. Aravkin}
\thanks{A.~Aravkin, Department of Applied Mathematics \&
    Institute of Health Metrics and Evaluation, University of Washington.}
\author{J.~V.~Burke}
\thanks{J.~V.~Burke, Math. Dept., University of Washington.
Supported in part by the U.S. NSF grant DMS-1514559.}
\author{A. Sholokhov}
\thanks{A.~Sholokhov,
Bill and Melinda Gates Foundation,
Department of Applied Mathematics, University of Washington}
\author{P. Zheng}
\thanks{P.~Zheng, Department of Health Metrics Sciences \&
    Institute of Health Metrics and Evaluation,
    University of Washington.
    }
\date{\today}

\maketitle

\begin{abstract}
Linear Mixed-Effects (LME) models are a fundamental tool for modeling clustered data, including cohort studies, longitudinal data analysis, and meta-analysis.  
The design and analysis of variable selection methods for LMEs is 
considerably more difficult than for linear regression because LME models are nonlinear. 
The approach considered here is motivated by a recent method for sparse relaxed regularized regression (SR3) for variable selection in the context of linear regression.
The theoretical underpinnings 
for the proposed extension to LMEs are developed, including consistency results, variational properties, 
implementability of optimization methods, and convergence results.  
In particular we provide convergence analyses for a basic implementation of SR3 for LME (called MSR3) and an accelerated 
hybred algorithm (called MSR3-fast).  
Numerical results show the utility and speed of these algorithms on realistic simulated datasets. 
The numerical implementations are available  
in an open source \texttt{python} package \texttt{pysr3}.
\end{abstract}

\noindent%
{\it Keywords:}  Mixed effects models, feature selection, nonconvex optimization

\section{Introduction}
Linear mixed-effects (LME) models are used to analyze nested 
or combined data across a range of groups or clusters.  
These models use covariates to separate the total population
variability (the fixed effects) from the group variability (the random effects).
LMEs borrow strength across groups to estimate key statistics in cases where the data within
groups may be sparse or highly variable, and 
play a fundamental role in population health sciences \cite{covid2020modeling,murray2020global}, meta-analysis \cite{dersimonian1986meta, zheng2021trimmed}, life sciences, and 
in many others domains \cite{zuur2009mixed}. 

This paper develops the theoretical bases for the algorithmic approach to variable selection within LME context presented in \cite{Practice}.
Although there are many successful algorithms and software
for variable selection
for linear regression, e.g. 
the Lasso method~(\cite{tibshirani1996regression,glmnet}) and related extensions,
approaches to variable selection for LMEs are far less settled 
with few open source software tools available
despite this being an active research area for over 20 years 
\cite{Buscemi2019Survey}.

Variable selection for LMEs is significantly complicated by the underlying
nonlinear structure associated with estimating variance parameters induced by the group structure. 
In the context of 
this paper, the nonlinearities
come from the logs of the determinants of the within group variances
as well as the regularizers used for variable selection.
Current approaches make key design decisions including 
the choice of likelihood (e.g. marginal/restricted/h- likelihood),  regularizer~(e.g. $\ell_1$~\cite{Krishna2008} or SCAD~\cite{ibrahim2011fixed}), and information criteria (\cite{Vaida2005,Ibrahim2011}). 
The wide variety of these decision choices has likely contributed to small number
of standardized software tools that allow for a comparison 
of different regularizers. 
The goal in \cite{Practice} is to fill this gap by proposing a unified methodological framework that
accommodates a variety of variable selection strategies based on a range
of easily implementable regularizers. Here we provide a 
theoretical justification for the algorithmic approach to the solution
of the marginalized maximum likelihood estimation problems presented
in \cite{Practice}.

The approach is motivated by the sparse relaxed regularization regression (SR3) strategy
developed in \cite{Zheng2019SR3}. Both the approach of~\cite{Zheng2019SR3}
and the MSR3/MSR3-fast extensions for LMEs described in~\cite{Practice} use 
auxiliary variables  to decouple the smooth terms from the 
variable selection regularizer.  The original and auxiliary
variables are tethered by adding to the objective their norm squared difference.
While the analysis of~\cite{Zheng2019SR3} relies on the least squares data-fitting term,
here we develop the algorithmic design and analysis required for the 
nonlinear and nonconvex LME extension.  

The paper proceeds as follows. 
The mathematical description of the LME model is given in Section 2 along with two results on the existence of solutions and a brief description of 
a naive PGD algorithm for their solution.
The proposed relaxation strategy described in~\cite{Practice} is given in Section~\eqref{sec:relax} where the the relaxation depends on a decoupling
parameter $\eta$ and log-barrier smoothing parameter $\mu$. In particular, we introduce
the optimal value function $\uemu$ 
used to exploit the decoupling of the likelihood
function from the sparsity regularizer. Here $\uemu$ is obtained by partially minimizing
the decoupled variables in the likelihood while keeping those in the regularizer fixed. 
Section~\ref{sec:pgd} introduces the MSR3 algorithm
as the PGD algorithm 
applied to the sum of $\uemu$ and the sparsity regularizer. 
A brief discussion the basic assumptions typically required
for establishing the viability of the PGD algorithm for this formulation is given.
Section~\ref{sec:smoothness} is the theoretical core of the paper. In this section
w show that the optimal value function function $\uemu$ satisfies the properties
necessary for the application of the PGD algorithm. 
In particular, we establish the Lipschitz continuity of $\nabla \uemu$ 
(Lemma \ref{lem:grad u lip}). 
The convergence results for MSR3 are presented in Section~\ref{sec:convergence pgd}
for fixed values of $\eta$ and $\mu$.
In Section~\ref{sec:hybrid alg} we address the key issues surrounding the initialization
of the coupling and smoothing parameters $\eta$ and $\mu$ when only approximate
values for $\uemu$ and $\nabla \uemu$ are known. Here we appeal to both
variable metric ideas as well as properties of the interior point algorithm.
In Section~\ref{sec:numerical} we give a briefly 
synopsis of some of these results obtained in \cite{Practice}.
These results indicate that
the use of the optimal value function can dramatically improve both the efficiency and the
performance of the numerical solution procedure in terms of computational 
speed and accuracy in variable selection.

\section{Models and Notation}
\label{sec:models}
Consider $m$ groups of observations indexed by $i$, with sizes $n_i$, 
so that the total number of observations is $n := n_1 + n_2 + \dots + n_m$. 
Each group is paired with a design matrix of fixed features $X_i \in \R^{n_i \times p}$ and a matrix of random features $Z_i \in \R^{n_i \times q}$ along with vectors of outcomes $Y_i \in \R^{n_i}$. 
Set 
{$X := [X_1^T, X_2^T, \dots, X_m^T]^T$ and $Z := [Z_1^T, Z_2^T, \dots, Z_m^T]^T$.} 
Following~\cite{Patterson1971, Pinheiro2000}, we define a Linear Mixed-Effects (LME) model as
\eq{
	\label{eq:lme_setup}
	Y_i & = X_i\beta + Z_iu_i + \varepsilon_i, \quad i = 1 \dots m \\
	u_i & \sim \NN(0, \Gamma),\quad \Gamma \in \bS_{+}^{q} \\
	\varepsilon_i & \sim \NN(0, \Lambda_i), \quad \Lambda_i \in \bS_{++}^{n_i}
}
 where $\beta \in \R^p$ is a vector of fixed (mean) covariates, 
 $u_i \in \R^{q}$ are unobservable random effects assumed to be distributed normally with zero mean and the unknown covariance matrix $\Gamma$, with $\bS_{+}^{k}$ 
 and $\bS_{++}^{k}$ denoting the sets of
 real symmetric $k\times k$ positive semi-definite
 and positive definite matrices, respectively. 
 We assume that the observation error covariance matrices 
$\Lambda_i$ are known and that the random effects covariance matrix 
is an unknown diagonal matrix, $\Gamma(\gam) := \Diag{\gamma}, \ \gamma \in \R^s_+$, where, for any vector $\gam\in \Rq$, $\Diag{\gamma}$ is the diagonal matrix with diagonal $\gam$.
 
Define $\omega_i := Z_i u_i + \varepsilon_i$ to be the unknown cluster-specific error vectors. Then model \eqref{eq:lme_setup} can also be viewed as a correlated noise model with
\[ 
	Y_i = X_i\beta + \omega_i,\quad \omega_i \sim \NN(0, \Omega_i(\Gamma)), \quad \Omega_i(\Gamma) = Z_i\Gamma Z_i^T + \Lambda_i.
\] 
This yields the marginalized  negative log-likelihood function of a linear mixed-effects model~(\cite{Patterson1971}):
\eq{
	\label{eq:lmm_objective}
	\LL_{ML}(\beta, \Gamma)  :=
	 \sum_{i = 1}^m \half(y_i - X_i\beta)^T\Omega_i(\Gamma)^{-1}(y_i - X_i\beta) + \half\ln{\det{\Omega_i(\Gamma)}}.
}
Maximum likelihood estimates for $\beta$ and $\Gam$ 
solve the problem
\eq{
	\label{eq:ml_lme}
	\min_{\beta, \Gamma} & \ \LL_{ML}(\beta, \Gamma) 
	\quad \mbox{s.t.} \quad \ \Gamma \in \bS_{+}^{q},
}
and when $\Gam=\Diag{\gam}$ the problem becomes
\eq{
	\label{eq:lme_gam}
	\min_{\beta\in\R^p, \gam\in\R^q_+} & \ \LL(\beta,\gam)
	:=\LL_{ML}(\beta, \Diag{\gamma}) .
	}
In this setting, an
entry $\gamma_j$ takes the  value $0$ when the corresponding coordinates of all random effects $u_{ij}$ are identically $0$ for all $i$, or equivalently,
the randomness in $Y_i$ is completely explained by $\eps_i$. 
The existence of solutions to \eqref{eq:lme_gam}
and, more generally, \eqref{eq:ml_lme} follows  
from the techniques developed in \cite{zheng2021trimmed}.
 

\begin{theorem}[Existence of a Minimizer]\label{thm:basic existence}
Let the assumptions in the statement of problem \eqref{eq:ml_lme} hold. Then optimal solutions to
\eqref{eq:ml_lme} exist.
\end{theorem}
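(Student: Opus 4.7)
The plan is the direct method of the calculus of variations: verify (i) that $\LL_{ML}$ is continuous on the closed set $\R^p \times \bS_+^q$, (ii) that $\LL_{ML}$ is bounded below, and (iii) that some sublevel set is relatively compact (modulo a harmless invariance subspace), and then invoke Weierstrass. Continuity is immediate once one notes that $\Lambda_i \succ 0$ forces $\Omega_i(\Gamma) = Z_i\Gamma Z_i^T + \Lambda_i \succeq \Lambda_i \succ 0$ for every $\Gamma \in \bS_+^q$, so both $\Omega_i(\Gamma)^{-1}$ and $\ln\det\Omega_i(\Gamma)$ depend smoothly on $(\beta,\Gamma)$. The same inclusion yields the uniform lower bound $\ln\det\Omega_i(\Gamma) \geq \ln\det\Lambda_i$, and the quadratic term is nonnegative, so $\LL_{ML}$ is bounded below.

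The substantive step is coercivity. First decompose $\R^q = \VV \oplus \VV^\perp$ with $\VV := \sum_i \Ran(Z_i^T)$, and observe that $\LL_{ML}$ is independent of the block of $\Gamma$ supported on $\VV^\perp$; so without loss of generality restrict attention to $\Gamma$ acting on $\VV$. For any sequence $(\beta^k,\Gamma^k)$ in a sublevel set $\{\LL_{ML}\leq\alpha\}$, I would combine the Weyl/Hadamard-type bound $\ln\det\Omega_i(\Gamma^k) \geq \ln\det\Lambda_i + c\,\ln\bigl(1+\lambda_{\max}(Z_i\Gamma^k Z_i^T)\bigr)$ with the finiteness of $\alpha$ to conclude that the restriction of $\Gamma^k$ to $\VV$ stays bounded. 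Pass to a subsequence $\Gamma^k \to \bar\Gamma\in\bS_+^q$; the matrices $\Omega_i(\Gamma^k)^{-1}$ then remain uniformly positive definite, so the quadratic term dominates $c\sum_i \|y_i - X_i\beta^k\|^2$, forcing $X\beta^k$ and hence $\beta^k$ (after quotienting by $\ker X$) to remain bounded.

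A further subsequence gives $(\beta^k,\Gamma^k) \to (\bar\beta,\bar\Gamma)\in\R^p\times\bS_+^q$, and by continuity $\LL_{ML}(\bar\beta,\bar\Gamma) = \inf \LL_{ML}$, establishing existence. The main obstacle will be managing the two invariance subspaces $\ker X$ and $\VV^\perp$, along which $\LL_{ML}$ is constant and strict coercivity therefore fails; the natural remedy, consistent with the techniques of \cite{zheng2021trimmed}, is to quotient the variables by these subspaces before applying Weierstrass, noting that this operation does not alter the infimum and yields a genuine minimizer in the original coordinates by lifting an arbitrary representative of the limiting class.
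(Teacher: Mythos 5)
Your plan is correct in substance, but it follows a genuinely different route from the paper's own proof. The paper never argues coercivity in the original variables $(\beta,\Gamma)$: it introduces $f(r,M)=\frac12[r^TM^{-1}r+\ln\det M]$, proves that sublevel sets of $f$ intersected with $\lset{(r,M)}{\mu_{\min}(M)\ge\alpha}$ are compact (Lemma~\ref{lem:levelcompact1}, via the eigenvalue bound \eqref{eq:eigbd}), uses the uniform bound $\mu_{\min}(\Omega(\Gamma))\ge\mu_{\min}(\Lambda)>0$ from \eqref{eq:eig1}, and then works with the image sequence $(r(\beta^k),\Omega(\Gamma^k))$: this stays in a compact set, and closedness of the affine images $\im{r}$ and $\im{\Omega}$ is what allows a limit point to be pulled back to some $(\bar\beta,\bar\Gamma)\in\R^p\times\bS_+^q$. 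You instead stay in $(\beta,\Gamma)$-coordinates, identify the invariance subspaces $\ker X$ and $\VV^\perp=\bigcap_i\ker Z_i$ explicitly, and bound the projected/compressed variables directly; your determinant inequality $\ln\det\Omega_i(\Gamma)\ge\ln\det\Lambda_i+\ln\bigl(1+\lambda_{\max}(Z_i\Gamma Z_i^T)/\lambda_{\max}(\Lambda_i)\bigr)$ plays the role of \eqref{eq:eigbd}. The two devices do the same job --- the paper's closed-image step is exactly what replaces your quotient-and-lift step --- but they buy different things. Your route never needs closedness of the image of $\bS_+^q$ under the affine map $\Omega$, since you bound an explicit representative sequence; the price is the one step you assert without proof, namely that boundedness of all $Z_i\Gamma^kZ_i^T$ bounds the $\VV$-compression of $\Gamma^k$. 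That step is true and should be included: since $\sum_i Z_i^TZ_i$ is positive definite on $\VV$, say $\sum_i Z_i^TZ_i\succeq\delta I$ there with $\delta>0$, one gets $\delta\,\trace(P_{\VV}\Gamma P_{\VV})\le\trace\bigl(\Gamma\sum_iZ_i^TZ_i\bigr)=\sum_i\trace(Z_i\Gamma Z_i^T)$ for every $\Gamma\in\bS_+^q$, where $P_{\VV}$ is the orthogonal projection onto $\VV$. Conversely, the paper's route never needs $\Gamma^k$ itself to be bounded, only $\Omega(\Gamma^k)$, at the price of relying on closedness of $\im{\Omega}$. A minor cosmetic point: uniform positive definiteness of $\Omega_i(\Gamma^k)^{-1}$ already follows from the bound on $Z_i\Gamma^kZ_i^T$, so you can bound $X\beta^k$ before, rather than after, extracting a convergent subsequence of $\Gamma^k$.
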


A standalone proof of Theorem \ref{thm:basic existence} that allows us to
give a simple extension to the variable selection case 
is given in Appendix \ref{adx:basic existence}.

We approach feature selection for the model 
\eqref{eq:lme_setup}-\eqref{eq:lme_gam} 
by adding a regularizer to 
the objective yielding an 
optimization problem of the form
\eq{
    \label{eq:vs_lme}
    \min_{x} & \LL\bg + R\bg+ \delta_{\CC}\bg,
}
where 
$\CC:=\R^p\times \R^q_+$,
$\map{R}{\R^P\times\R^q_+}{\eR_+:=\R_+\cup\{+\infty\}}$
is a 
lower semi-continuous (lsc) regularization term, and
$\delta_{\CC}$ is the convex indicator function
\[
    \delta_{\CC}(x) := \begin{cases} 0, &  x \in \CC \\ +\infty, & x \not\in \CC .\end{cases}
\]  
In practice, it is often advisable to include a constraint of the form 
$\gam\le\gammax$ for $\gam\in\R^q_{++}$ chosen
sufficiently large since an excessively large variance usually indicates 
that the model is poorly posed and needs review. 
Such a constraint is also numerically expedient since it prevents $\gam$ from diverging. 
We return to this issue when our algorithm is specified.
We have the following extension to Theorem \ref{thm:basic existence2}
which tells us that solutions to \eqref{eq:vs_lme}
exist whenever $R$ is level compact. 
The proof appears in Appendix \ref{adx:basic existence}.

\begin{theorem}\label{thm:basic existence2}
Let the assumptions in the statement of problem \eqref{eq:lme_gam} hold.
Suppose $\map{\hR}{\R^p\times\R^q_+}{\R\cup\{+\infty\}}$ 
is lsc and level compact (i.e., $\epi{R}:=\lset{(\bg,\nu)}{R\bg\le\nu}$ is closed
and $\lset{\bg}{R\bg\le\nu}$ is bounded for all $\nu\in\R$).
Then $\LL+\hR$ is level compact and solutions to
the following optimization problem exist:
\eq{
\label{eq:extended loss}
\min_{\beta\in\R^p, \gamma\in\R^q_+}  \LL(\beta, \gamma) + 
\hR(\beta, \gamma) .
}
\end{theorem}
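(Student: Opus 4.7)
The plan is to reduce level compactness of $\LL+\hR$ to the level compactness of $\hR$ by exploiting the fact that Theorem~\ref{thm:basic existence} guarantees that $\LL$ is bounded below on $\CC=\R^p\times\R^q_+$. Once level compactness of the sum is in hand, existence follows from the standard Weierstrass argument for lsc functions with compact sublevel sets.

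First, I would apply Theorem~\ref{thm:basic existence} to the problem \eqref{eq:lme_gam} to conclude that $c^\ast:=\inf_{(\beta,\gam)\in\CC}\LL(\beta,\gam)$ is finite and attained. Second, I would observe the sublevel-set inclusion: for any $\nu\in\R$,
\[
\lset{(\beta,\gam)\in\CC}{\LL(\beta,\gam)+\hR(\beta,\gam)\le\nu}\ \subseteq\ \lset{(\beta,\gam)\in\CC}{\hR(\beta,\gam)\le\nu-c^\ast}.
\]
The right-hand side is bounded by the level-compactness hypothesis on $\hR$, hence the sublevel sets of $\LL+\hR+\delta_\CC$ are bounded.

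Third, I would verify that $\LL+\hR+\delta_\CC$ is lsc on $\R^p\times\R^q$, so that its bounded sublevel sets are in fact compact. Since $\Lambda_i\in\bS_{++}^{n_i}$ and $\Diag{\gam}\in\bS_{+}^{q}$ imply $\Omega_i(\Diag{\gam})\succeq\Lambda_i\succ0$, the matrix $\Omega_i$ remains uniformly invertible and smooth in $\gam$ throughout the closed set $\CC$ (including on its boundary $\gam_j=0$), so $\LL$ is continuous on $\CC$. Adding the lsc functions $\hR$ and $\delta_\CC$ preserves lower semicontinuity. Combined with Step~2, this makes $\LL+\hR+\delta_\CC$ a proper lsc function with compact sublevel sets on $\R^p\times\R^q$, which by Weierstrass attains its minimum, yielding a solution of \eqref{eq:extended loss}.

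The only subtle step is Step~3, namely confirming continuity of $\LL$ at the boundary where some components of $\gam$ vanish (so that sublevel sets of the sum are genuinely closed, not merely closed relative to the interior). Strict positive definiteness of each $\Lambda_i$ handles this immediately: $\Omega_i$ never degenerates, so both $\Omega_i^{-1}$ and $\ln\det\Omega_i$ extend continuously to $\gam\in\R^q_+$. Steps~1,~2, and~4 are then routine coercivity-plus-Weierstrass reasoning once the lower bound $c^\ast$ from Theorem~\ref{thm:basic existence} is in hand.
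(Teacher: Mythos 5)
Your proof is correct and follows essentially the same route as the paper: a lower bound on $\LL$ over $\R^p\times\R^q_+$ combined with level compactness of $\hR$ yields bounded sublevel sets of the sum, and lower semicontinuity then gives level compactness and existence via Weierstrass. The only cosmetic differences are that you obtain the lower bound on $\LL$ by invoking Theorem~\ref{thm:basic existence} (which is proved independently, so no circularity) rather than citing the explicit bound $\frac{n+1}{2}\ln\talf$ from \eqref{eq:eigbd} and \eqref{eq:eig1}, and that you spell out the continuity of $\LL$ at the boundary $\gam_j=0$ and the closedness/Weierstrass step, which the paper leaves implicit.
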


\begin{algorithm}[H]
\SetAlgoLined
$x = x_0$, $L>0$, 
$\alpha < \frac{1}{L}$\\ 
 \While{not converged}{
    $x^+ = \prox_{\alpha R + \delta_{\CC}}(x - \alpha \nabla \LL(x))$;\\
 }
 \caption{\label{alg:pgd_for_lme}Proximal Gradient Descent (PGD) for 
 LMEs} 
\end{algorithm}

Since $\LL$ is smooth on its domain, a standard approach to solving
\eqref{eq:vs_lme} is the proximal gradient descent (PGD) algorithm described in 
Algorithm \ref{alg:pgd_for_lme}, where, for $x=\bg$ and $w=\tbg$,
\[\begin{aligned}
\prox_{\alpha R + \delta_{\CC}}(x-\alf\nabla \LL(x))
&:=\argmin_{w\in\CC} \alpha R(w) 
+\frac{1}{2\alf}\Vert w-(x-\alf \nabla f(x))\Vert^2 
\\ 
&=\argmin_{w\in\CC} f(x)+\ip{\nabla f(x)}{w-x}+
R(w)+\frac{1}{2\alf}\Vert w-x\Vert^2.
\end{aligned}
\]
Here the parameter $L$ is intended
to be a global Lipschitz constant for $\nabla \LL$ over its domain. 
Unfortunately, $\nabla \LL$ is not globally Lipschitz on its domain. 
Nonetheless, it is possible to obtain convergence
results with the inclusion of a line search or trust region strategy 
\cite{Burke-Engle18}.
In this paper a different approach is explored that uses
global variational information on $\LL$ rather than
the local linearizations for $\LL$ which form the basis of the PGD algorithm and its
variants.

\section{Relaxation of the mixed-effects variable selection model}
\label{sec:relax}

Our strategy for obtaining approximate solutions to the
mixed-effects variable selection problem  \eqref{eq:vs_lme} is
motivated by the sparse relaxed regularization regression (SR3) strategy
developed in \cite{Zheng2019SR3}. That is, we introduce 
auxiliary variables to decouple two competing goals -- variable selection and data fitting. In addition, we add a barrier term to 
relax the constraint $0\le \gam$.
The
decoupling uses the coupling function 
$\map{\kappa_\eta}{\Rp\times\Rq}{\R}$
given by 
\begin{equation}\label{eq:kappa}
\kappa_\eta(y,z):=\frac{\eta}{2}\norm{(y,z)}^2 ,
\end{equation}
where $\eta\in\R_+$, while the constraint $0\le\gam$ is relaxed
using the perspective of the negative log, i.e 
$\map{\vphi}{\R^q\times\R}{\R\cup\{\infty\}}$ given by
\[
\vphi(\gam,\mu):=
\begin{cases}
-\mu\sum_{i=1}^q\ln (\gam_i/\mu)&,\ \mu>0,\\
\del_{\R^q_+}(\gam)&,\ \mu=0,\\
+\infty&,\ \mu<0.
\end{cases}
\]
The mapping $\vphi$ is known to be a closed proper convex function and, 
for $\mu>0$, it is
essentially equivalent to the well-known log-barrier function. 
For more information on the perspective mapping, its calculus,
and perspective duality, we refer the reader to \cite{ABDFM18,ABF13}.
We call $\eta$ the coupling parameter and $\mu$ the log-barrier parameter
and write $\phimu(\cdot):=\vphi(\cdot,\mu)$.
The relaxed problem employs 
auxiliary variables $\tbg$ 
and relaxation parameters $0\le\eta$ and $0\le \mu$
to obtain the problem 
\begin{equation}\label{eq:relax} 
\begin{aligned}
\min_{\bg, \tbg} & \LL(\beta, \gamma) 
	+\phi_\mu(\gam)+\kappa_\eta(\beta-\tbeta,\gam-\tgam)
+ R(\tbeta, \tgamma) 
\\
\text{s.t. } & \tgamma \geq 0\ .
\end{aligned}
\end{equation}
We rewrite \eqref{eq:relax} 
so as to separate the smooth and nonsmooth components to obtain
\eq{ \label{eq:relax2}
    \min_{\bg, \tbg} & \LL_{\emu}(\bg,\tbg) + R\tbg +\del_{\R^q_+}(\tgam)\ ,
}
where 
\eq{\label{eq:LLlam}
    \LL_{\emu}(\bg, \tbg) := 
    \LL\bg + \phimu(\gam)+\keta(\beta - \tbeta, \gamma - \tgamma).
}
Observe that, for all $\mu,\eta\in\R_+$, $\LL_{\emu},\ \nabla \LL_{\emu}$ and $\nabla^2 \LL_{\emu}$ are continuous
on $(\Rp\times\dom(\phimu))\times(\Rp\times\Rq)$ 
(see Appendix \ref{appendix:derivatives_of_lmm}) 
so that $\LL_{\emu}$ is smooth on its domain.
As in \cite{Zheng2019SR3}, we use the 
decoupling to write \eqref{eq:relax2}
as an iterated optimization problem over the smooth components of the objective.
This yields a representation of the form
\begin{equation}
    \label{eq:relax3}
    \min_{\tbg} \uemu\tbg + R\tbg+\del_{\Rqp}(\tgam),
\end{equation}
where 
\begin{equation}
    \label{eq:value_function_definition}
    \uemu\tbg := \min_{\bg } 
    \LL_{\emu}(\bg, \tbg)\ .
\end{equation}
This is the formulation of the mixed-effects variable selection problem we study.
Our focus is on the optimal value function $\uemu$ which captures global 
variational information about the function $\LL$ over its domain.
We show that $\uemu$ has a locally Lipschitz continuous gradient
and that the evaluation of $\uemu$ and $\nabla \uemu$ is accomplished
by optimizing a well conditioned strongly convex function. This
allows us to apply the PGD algorithm to the function $\uemu$ rather than
the function $\LL$. Our numerical studies show that the global information 
captured by $\uemu$ significantly improves both the accuracy of the solution
obtained and the overall numerical efficiency of the algorithm.

\section{Proximal Gradient Descent for $\uemu\tbg + R\tbg+\del_{\Rqp}(\tgam)$} 
\label{sec:pgd}

We follow the analysis of the PGD algorithm given in \cite[Chapter 10]{AB17}
as it applies to the objective
\begin{equation}\label{eq:varphi}
\Phi_\emu\tbg:=\uemu\tbg +\tR\tbg,\ \ \text{where}\ 
\tR\tbg:=R\tbg+\del_{\Rqp}(\tgam).
\end{equation}
%
Since $u_\emu$ is nonconvex, one typically 
applies a line search method to select stepsize. 
However, this is often not required in practice. 
For this reason we state the algorithm with and without a line search.
\medskip



\begin{algorithm}[H]
\SetAlgoLined
$\tw = \tw^0$, $\gammax>\tgam^0$, $\alpha < \frac{2}{L}$ 
for $L>0$.\\
 \While{not converged and $\tgam\le\gammax$}{
    $\tw^+ = \prox_{\alpha \tR }(\tw - \alpha \nabla \uemu (\tw))$\\
 }
\caption{\label{alg:MSR3}Proximal Gradient Descent for 
$\Phi_\emu$}
\end{algorithm}
\medskip


\begin{algorithm}[H]
\SetAlgoLined
{\bf Initialize:} 
$\theta\in(0,1),\ \tau\in(0,1)$,
$\eta>0,\ \mu>0$, $\eps_{\mbox{\tiny Tol}}\ge 0$
$k=0$, $t_0>0$, $\bw^{0}=(\tbeta^0,\tgam^0)\in \Rp\times\R^q_+$
with $\inf\Phi_\emu <\Phi_\emu(\bw^0)$, 
$\gammax>\tgam^0$, 
$w^0=\prox_{t_0\tR}(\bw^{0}-t_0\nabla u_\emu(\bw^0))$.
\\
\While{{\small $\norm{w^k-\bw^k}>\eps_{\mbox{\tiny Tol}}$
and $\gam^k\le\gammax$}}{\small\smallskip
\begin{enumerate}
\item[(i)] $t_{k+1}=\max\lset{t}{
\begin{aligned}&s\in\bW,\, t=t_0\theta^s,\ 
w=\prox_{t\tR}(w^k-t\nabla u_\emu(w^k))\\ &
\phi(w)
\le \phi(w^k)-\tau t\norm{w^k-w}^2
\end{aligned}}$.
\item[(ii)] $\bw^{k+1}=w^{k}$
\item[]$w^{k+1}=\prox_{t_{k+1}\tR}(w^k-t_{k+1}\nabla u_\emu(w^k))$
\item[]$k=k+1$
\end{enumerate}
}
\smallskip
\caption{\label{alg:pgd with bt}Proximal Gradient Descent 
fo $\Phi_\emu$ with Backtracking}
\end{algorithm}
\medskip

In Algorithm \ref{alg:MSR3}, 
the parameter $L$ is assumed to be a global Lipschitz
constant for $\nabla \uemu$. In Section \ref{sec:convergence pgd}, we show that the existence of $L$ is not needed.
In both algorithms we introduce the requirement that $\gam^k\le\gammax$.
While it is possible to include an explicit constraint of this form in the 
optimal variable selection problem \eqref{eq:vs_lme}, 
we do not do so since we assume that $\gammax$ is chosen 
so large that, from a practical perspective, 
the violation of this constraint indicates that the model is 
poorly posed and the algorithm needs to be terminated.
We base our analysis of the convergence properties of
Algorithms 1 and 2 on \cite[Theorem 10.15]{AB17} which makes use
of the following three basic assumptions:
\smallskip

\noindent
{\bf Basic Assumptions for the PGD Algorithm}
\begin{enumerate}
\item[(A)] $\map{\tR}{\Rp\times\Rq}{\eR}$ is a closed proper convex function.
\item[(B)] $\map{\uemu}{\Rp\times\Rq}{\eR}$ is closed and proper, 
$\dom{\uemu}$ is
convex, $\dom{\tR}\subset\intr{\dom{\uemu}}$, and
$\uemu$ is $L_\emu$-smooth over $\intr{\dom{\uemu}}$.
\item[(C)] Problem \eqref{eq:relax3} has an optimal solution with optimal 
value $\Phi_{\mbox{\tiny OPT}}$.
\end{enumerate}

\noindent
We assume that (A) holds. This is not an overly
restrictive assumption since it is satisfied by most of the standard
variable selection regularizers.
We show that (C) holds when $R$ satisfies
an additional coercivity hypothesis (Theorem \ref{thm:relaxed existence}). 
On the other hand, establishing that (B) holds
in a concrete setting such as ours
can be quite difficult. 
In particular, just as with $\LL$, $\uemu$ may fail to be globally Lipschitz.
Validating Assumption (B) as well as developing a technique for circumventing
the need for a global Lipschitz constant for $\nabla\uemu$ 
consumes the majority of the theoretical development. 


\section{The Smoothness of $u_\emu$}\label{sec:smoothness}
We investigate the relationship between the problems \eqref{eq:vs_lme}
and \eqref{eq:relax2}, the existence of solutions to 
 \eqref{eq:relax2}, and the properties of the function $\uemu$ and its derivative.

\subsection{Underlying convexity}

\begin{lemma}[$\LL+\phimu$ is Weakly Convex]\label{lem:LL weak cvx}
Let $\LL$ be as given in \eqref{eq:lme_gam}. Then
\begin{equation}\label{eq:hess LL}
\nabla^2\LL{(\beta,\gam)}=\sum_{i=1}^m
S_i^T\begin{bmatrix}X_i^T\\ -Z_i^T\end{bmatrix}
\Omega_i(\gam)^{-1}
\begin{bmatrix}X_i& -Z_i\end{bmatrix}S_i
-\begin{bmatrix}
0&0\\ 0& \half(Z_i^T\Omega_i(\gam)^{-1}Z_i)^{\circ2}
\end{bmatrix},
\end{equation}
for all
$(\beta,\gam)\in\R^p\times\R^q_+$, where 
\[
S_i:=\begin{bmatrix}
I_q&0\\ 0&\Diag{Z_i^T\Omega_i^{-1}(X_i\beta-Y_i)}
\end{bmatrix}
\]
and, for any $A\in\R^{t\times t}$, $A^{\circ 2}:=A\circ A$. In particular, this implies
that the matrix
\begin{equation}\label{eq:psd for LL}
\begin{bmatrix}
\nabla_{\beta\beta}\LL(\beta,\gam)&\nabla_{\gam\beta}\LL(\beta,\gam)\\
\nabla_{\beta\gam}\LL(\beta,\gam)&\nabla_{\gam\gam}\LL(\beta,\gam)
+\blam I\end{bmatrix}
\end{equation}
is positive semidefinite
for $\bar\eta= \nu m$, where
\[
\nu:=max\lset{(1/2) \mu_\mmin(\Lam_i)^{-2}\sig^4_\mmax(Z_i)}{i=1,\dots,m},
\]
$\mu_\mmin(\Lam_i)$ is the smallest eigenvalue of $\Lam_i$, and
 $\sig_\mmax(Z_i)$ is the largest singular value of $Z_i,\, i=1,\dots,m$.
Consequently, for any $\tbg\in\dom{\tR}$ and $\mu\ge 0$, the mapping 
\(
\bg\mapsto \LL_{\emu}(\bg, \tbg)
\)
is convex for all $\eta\ge\bar\eta:=\nu m$.
In particular, this implies that $\LL+\phimu$ is weakly convex for any $\mu\ge 0$, 
and the mapping
\(
\bg\mapsto\LL_{\emu}(\bg,\tbg)
\)
is strongly convex for $\eta> \bar\eta$ with modulus of strong convexity $(\eta-\bar\eta)$
regardless of the choice of $\tbg\in\Rp\times\Rq$.
\end{lemma}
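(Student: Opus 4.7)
The plan is to prove the lemma in three stages: derive the Hessian formula by direct calculation, bound the indefinite piece using Schur-product inequalities, and finally assemble the convexity claims for $\LL_{\emu}$ and $\LL+\phimu$.

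First I would compute $\nabla^2\LL(\beta,\gam)$ from the definition \eqref{eq:lmm_objective}. Writing $r_i := Y_i - X_i\beta$, $\Omega_i := \Omega_i(\gam) = Z_i\Diag{\gam}Z_i^T + \Lam_i$, and using the standard rules $\partial\Omega_i/\partial\gam_j = z_{ij}z_{ij}^T$, $\partial\Omega_i^{-1}/\partial\gam_j = -\Omega_i^{-1}z_{ij}z_{ij}^T\Omega_i^{-1}$, and $\partial\ln\det\Omega_i/\partial\gam_j = z_{ij}^T\Omega_i^{-1}z_{ij}$, I would first compute
\[
\nabla_\beta\LL = -\sum_i X_i^T\Omega_i^{-1}r_i,\quad
\frac{\partial\LL}{\partial\gam_j} = \tfrac12\sum_i\bigl(z_{ij}^T\Omega_i^{-1}z_{ij} - (z_{ij}^T\Omega_i^{-1}r_i)^2\bigr),
\]
and then differentiate once more. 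The block $\nabla^2_{\beta\beta}\LL = \sum_i X_i^T\Omega_i^{-1}X_i$ and the mixed block $\nabla^2_{\beta\gam}\LL$ whose $(\ell,j)$-entry is $\sum_i x_{i\ell}^T\Omega_i^{-1}z_{ij}z_{ij}^T\Omega_i^{-1}r_i$ will match the top-left and off-diagonal parts of $S_i^T M_i S_i$ with $M_i := \begin{bmatrix}X_i^T\\-Z_i^T\end{bmatrix}\Omega_i^{-1}\begin{bmatrix}X_i & -Z_i\end{bmatrix}$, once one observes that $Z_i^T\Omega_i^{-1}(X_i\beta-Y_i) = -Z_i^T\Omega_i^{-1}r_i$. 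For the $\gam\gam$ block, a direct computation gives $\nabla^2_{\gam_j\gam_k}\LL = \sum_i (z_{ij}^T\Omega_i^{-1}r_i)(z_{ij}^T\Omega_i^{-1}z_{ik})(z_{ik}^T\Omega_i^{-1}r_i) - \tfrac12\sum_i(z_{ij}^T\Omega_i^{-1}z_{ik})^2$, matching the bottom-right block of $S_i^T M_i S_i$ minus $\tfrac12(Z_i^T\Omega_i^{-1}Z_i)^{\circ 2}$. This establishes \eqref{eq:hess LL}.

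Next I would show that \eqref{eq:psd for LL} is PSD. Since $S_i^T M_i S_i \succeq 0$ for each $i$, the only indefinite contribution comes from the Hadamard-square block. I would bound $\lambda_{\max}\bigl(\tfrac12(Z_i^T\Omega_i^{-1}Z_i)^{\circ 2}\bigr)$ as follows: by the Schur product theorem, if $A\succeq 0$ then $\lambda_{\max}(A\circ A)\le\lambda_{\max}(A)\cdot\max_\ell A_{\ell\ell}\le\lambda_{\max}(A)^2$. Applied to $A_i := Z_i^T\Omega_i^{-1}Z_i$, and using $\Omega_i \succeq \Lam_i$, so $\Omega_i^{-1}\preceq\Lam_i^{-1}$, gives $\lambda_{\max}(A_i)\le\sig_{\max}(Z_i)^2/\mu_{\min}(\Lam_i)$. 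Thus $\lambda_{\max}\bigl(\tfrac12 A_i^{\circ 2}\bigr)\le\tfrac12\sig_{\max}(Z_i)^4/\mu_{\min}(\Lam_i)^2\le\nu$, and summing over $i$ shows that the $\gam\gam$ block of $-\nabla^2\LL$ has largest eigenvalue at most $m\nu = \bar\eta$. Adding $\bar\eta I$ to this block therefore dominates the negative part and yields the PSD claim \eqref{eq:psd for LL}.

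Finally I would deduce the convexity claims. The barrier $\phimu$ is convex on $\R^q_+$ (as a perspective of $-\ln$, or by direct computation since $\nabla^2\phimu = \mu\Diag{1/\gam_i^2}\succeq 0$), and $\keta$ has Hessian $\eta I$. Consequently
\[
\nabla^2_{\bg\bg}\LL_{\emu}(\bg,\tbg) = \nabla^2\LL(\bg) + \begin{bmatrix}0&0\\ 0&\nabla^2\phimu(\gam)\end{bmatrix} + \eta I.
\]
Decomposing $\eta I = \eta\,\mathrm{diag}(I,0) + \bar\eta\,\mathrm{diag}(0,I) + (\eta-\bar\eta)\,\mathrm{diag}(0,I)$ and using \eqref{eq:psd for LL} shows this Hessian is PSD when $\eta\ge\bar\eta$, proving convexity in $\bg$. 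For $\eta>\bar\eta$ one further has $\eta\,\mathrm{diag}(I,0) + (\eta-\bar\eta)\mathrm{diag}(0,I) \succeq (\eta-\bar\eta) I$, which gives strong convexity with modulus $(\eta-\bar\eta)$ independently of $\tbg$. Since the same decomposition with $\eta = 0$ shows $\nabla^2(\LL+\phimu) + \bar\eta\,\mathrm{diag}(0,I) \succeq 0$, the weak convexity of $\LL+\phimu$ follows.

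The main obstacle I anticipate is the tensor-like $\gam\gam$ Hessian calculation and the correct identification of the compact $S_i^T M_i S_i$ form, together with a clean application of the Schur-product bound $\lambda_{\max}(A\circ A)\le\lambda_{\max}(A)^2$ for PSD $A$; the rest of the argument is then linear-algebraic bookkeeping.
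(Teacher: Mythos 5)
Your proposal is correct and follows essentially the same route as the paper: the Hessian decomposition \eqref{eq:hess LL} (which the paper relegates to Appendix \ref{appendix:derivatives_of_lmm}), the bound $\mu_\mmax(A^{\circ 2})\le\mu_\mmax(A)^2$ for $A\in\bS^q_+$ applied to $A_i=Z_i^T\Omega_i^{-1}Z_i$, the eigenvalue bound $\mu_\mmax(A_i)\le\sig^2_\mmax(Z_i)/\mu_\mmin(\Lam_i)$ (which you derive elementarily from $\Omega_i\succeq\Lam_i$ rather than citing \cite[Theorem 3.1]{ABBP2021}), and then the assembly of convexity and strong convexity via the coupling term's $\eta I$ and convexity of $\phimu$. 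The only delta is that your argument is slightly more self-contained on the eigenvalue bound; otherwise it matches the paper's proof.
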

\begin{proof}
The formula for $\nabla^2\LL$ is given in Appendix \ref{appendix:derivatives_of_lmm}
(see \eqref{eq:hess LL}).
By \cite[Theorem 3.1]{ABBP2021}, 
$\mu_\mmax(\left(Z_i^T\Omega_i(\gam)^{-1}Z\right)\le
\lam_\mmin^{-1}\sig_\mmax^2(Z_i)$, and since 
$\mu_\mmax(H^{\circ 2})\le \mu_\mmax^2(H)$
for all $H\in\bS^q_+$ \cite{HJ85}, we have
\[
\mu_\mmax\left(\half(Z_i^T\Omega_i(\gam)^{-1}Z_i)^{\circ2}\right)
\le (1/2) \lam_\mmin^{-2}
\sig^4_\mmax(Z_i)=:\nu_i
\qquad  i=1,\dots,m.
\]
This establishes that the matrix in \eqref{eq:psd for LL} is positive semidefinite.
Since $\phimu$ is convex, the mapping
\[
(\beta,\gam)\mapsto \LL_{\emu}(\bg,\tbg) 
\]
is strongly convex for any choice
of $\blam> \nu m$, where 
\eq{\label{eq:nu}
\nu:=\max_{i=1,\dots,m}\nu_i.
}
\end{proof}
For the remainder of the paper, we assume that 
\eq{\label{eq:blam1}\eta> \nu m=:\bar\eta} 
so that the mapping
$(\beta,\gam)\mapsto \LL_{\emu}(\bg,\tbg)$ 
is strongly convex with 
positive definite Hessian
regardless of the 
choice of $\tgam\in\R^q$. With this in mind, the function $\uemu$ defined by
\eqref{eq:value_function_definition} resembles a Moreau envelope.
However, this is misleading since, in particular,
we are not even assured of the existence of solutions to
the optimization problem defining $\uemu$. 

\subsection{Existence and consistency}\label{ssec:existence and consistency}

To establish the existence of solutions to the relaxed
optimization problems \eqref{eq:relax2} and the
problems defining the parametrized family $\uemu$ in
\eqref{eq:value_function_definition}, we assume that $R$ is $1$-coercive.
\begin{lemma}\label{lem:lb plus 1c}
Given $\mu>0$ let $\phimu$ be as defined above, and assume that 
$\map{R}{\Rp\times\Rq}{\R\cup\{+\infty\}}$
is 1-coercive, i.e., 
\(
\liminf_{\Vert{\tbg}\Vert\rightarrow\infty}\Vert{\tbg}\Vert^{-1}R\tbg>0.
\)
Then 
$\phimu+R$ 
is level compact.
\end{lemma}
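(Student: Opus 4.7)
The plan is to verify the two defining properties of level compactness for $\phi_\mu+R$, viewed as the function $(\tilde\beta,\tilde\gamma)\mapsto \phi_\mu(\tilde\gamma)+R(\tilde\beta,\tilde\gamma)$ on $\Rp\times\Rq$: namely, lower semicontinuity (closed epigraph) and boundedness of every sublevel set.

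Lower semicontinuity is immediate. The function $\phi_\mu$ is closed, proper and convex, as noted right after its definition in Section~\ref{sec:relax}, and $R$ is lsc by hypothesis. Hence $(\tilde\beta,\tilde\gamma)\mapsto \phi_\mu(\tilde\gamma)+R(\tilde\beta,\tilde\gamma)$ is a pointwise sum of lsc functions and is therefore lsc, so $\epi(\phi_\mu+R)$ is closed.

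The substantive part is the sublevel-set bound. The quantitative intuition is that, for $\mu>0$, $\phi_\mu$ decays to $-\infty$ only sublinearly in $\norm{\tilde\gamma}$, while $1$-coercivity forces $R$ to grow at least linearly. To make this sharp, I would use the tangent-line inequality $\ln t\le \varepsilon t-\ln\varepsilon-1$ for all $t>0$ and every $\varepsilon>0$. Applied coordinatewise with $t=\tilde\gamma_i/\mu$ and summed over $i=1,\dots,q$, this yields, for every $\tilde\gamma\in\Rq_{++}$,
\[
\phi_\mu(\tilde\gamma)\;=\;-\mu\sum_{i=1}^q\ln(\tilde\gamma_i/\mu)\;\ge\;-\varepsilon\sum_{i=1}^q \tilde\gamma_i+q\mu(\ln\varepsilon+1)\;\ge\;-\varepsilon\sqrt{q}\,\norm{(\tilde\beta,\tilde\gamma)}+C_\varepsilon,
\]
with $C_\varepsilon$ depending only on $\varepsilon$, $\mu$, and $q$. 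The $1$-coercivity of $R$ gives constants $c>0$ and $M>0$ such that $R(\tilde\beta,\tilde\gamma)\ge c\,\norm{(\tilde\beta,\tilde\gamma)}$ whenever $\norm{(\tilde\beta,\tilde\gamma)}\ge M$.

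Now fix $\nu\in\R$ and any $(\tilde\beta,\tilde\gamma)$ in the $\nu$-sublevel set of $\phi_\mu+R$; note that any such point must lie in $\Rp\times\Rq_{++}$, the domain of $\phi_\mu$. Choose $\varepsilon>0$ with $\varepsilon\sqrt{q}<c$ and add the two bounds: whenever $\norm{(\tilde\beta,\tilde\gamma)}\ge M$,
\[
\nu\;\ge\;\phi_\mu(\tilde\gamma)+R(\tilde\beta,\tilde\gamma)\;\ge\;(c-\varepsilon\sqrt{q})\,\norm{(\tilde\beta,\tilde\gamma)}+C_\varepsilon,
\]
which forces $\norm{(\tilde\beta,\tilde\gamma)}$ to stay bounded by a constant depending only on $\nu,\mu,q,\varepsilon,c,C_\varepsilon,M$. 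Hence every sublevel set is bounded, and combined with lower semicontinuity this gives level compactness. I do not anticipate a serious obstacle; the only subtlety is having a sublinear lower bound on $\phi_\mu$ whose linear-slope coefficient can be made strictly smaller than the coercivity modulus of $R$, which is supplied exactly by the parametrized tangent-line bound above.
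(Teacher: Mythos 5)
Your proof is correct and rests on the same underlying mechanism as the paper's: the $1$-coercivity of $R$ supplies linear growth, which dominates the at-most-logarithmic decay of $\phi_\mu$ on $\R^q_{++}$. The execution differs in a useful way, though. The paper argues sequentially: it takes $\norm{(\tbeta^k,\tgam^k)}\uparrow\infty$, splits into the cases $\{\tgam^k\}$ bounded (where $\phi_\mu$ is bounded below and $R\to\infty$) and $\{\tgam^k\}$ unbounded (where $1$-coercivity gives $R(\tbeta^k,\tgam^k)\ge\hat\alpha\sum_i\tgam^k_i$ for large $k$ and the coordinatewise divergence of $\hat\alpha t-\mu\ln(t/\mu)$ finishes), and concludes coercivity of the sum. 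You instead build a single uniform affine minorant $\phi_\mu(\tgam)\ge-\varepsilon\sqrt{q}\,\norm{(\tbeta,\tgam)}+C_\varepsilon$ from the tangent-line inequality for $\ln$, choose $\varepsilon\sqrt{q}$ strictly below the coercivity modulus $c$ of $R$, and read off an explicit bound on every sublevel set with no case analysis or subsequence bookkeeping; you also record the lower-semicontinuity half of level compactness, which the paper's proof leaves implicit (note that, as stated, the lemma only assumes $1$-coercivity, so lsc of $R$ comes from the paper's standing assumptions rather than the lemma's hypotheses — worth flagging rather than calling it a hypothesis). Net effect: the paper's route is shorter on estimates, yours is more quantitative and arguably cleaner, and both are valid.
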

%
\begin{proof}
If $\mu=0$, then the result is trivially true, so we assume that $\mu>0$.
Let $\{\bgk\}\subset \Rp\times\Rqp$ be such that $\norm{\bgk}\uparrow\infty$.
We need to show that $\phimu(\gam^k)+R\bgk\rightarrow \infty$.
If $\{\gam^k\}$ is bounded, then $\phimu(\gam^k)+R\bgk\rightarrow \infty$
since in this case $\phimu(\gam^k)$ is bounded below.
So assume that $\{\gam^k\}$ is unbounded which implies that
$\phimu(\gam^k)\rightarrow -\infty$. Since $R$ is 1-coercive,
we know that there is an $\hat\alf>0$ such that, for $k$ sufficiently large,
$R\bgk\ge\hat\alf\sum_{i=1}^q\gam^k_i$. But then
$\phimu(\gam^k)+R\bgk\ge \sum_{i=1}^q(\hat\alf\gam^k_i-\mu\ln(\gam^k_i))$
where the right-hand side diverges to $+\infty$ as $k\uparrow\infty$.
Hence, $\phimu(\gam^k)+R\bgk\rightarrow \infty$.
\end{proof}


\begin{theorem}\label{thm:relaxed existence}
Let $\LL$  be as in Theorem \ref{thm:basic existence2}
and let $\eta> 0$ satisfy \eqref{eq:blam1}.
Let $\mu\ge 0$. If $\mu=0$, assume that $\map{R}{\Rp\times\Rq}{\R_+\cup\{+\infty\}}$
is level compact; otherwise, assume $R$
is 1-coercive.
Then solutions to \eqref{eq:relax2} always exist.
\end{theorem}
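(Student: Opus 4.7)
The plan is to invoke the Weierstrass principle on the joint objective
\[
F(\bg,\tbg):=\LL(\bg)+\phimu(\gam)+\keta(\beta-\tbeta,\gam-\tgam)+R(\tbg)+\delta_{\R^q_+}(\tgam),
\]
by showing it is proper, lower semicontinuous, and has bounded sublevel sets; existence of a minimizer of \eqref{eq:relax2} then follows at once. Lower semicontinuity and properness are immediate from inspection of the summands: $\LL$ is continuous on its domain $\R^p\times\R^q_+$ (Appendix \ref{appendix:derivatives_of_lmm}), $\phimu$ is closed proper convex, $\keta$ is a continuous quadratic, $R$ is lsc by hypothesis, and $\delta_{\R^q_+}$ is the indicator of a closed convex set. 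The real work lies in level compactness.

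For $\mu=0$, the argument is short: $\phi_0=\delta_{\R^q_+}$, Theorem \ref{thm:basic existence} provides a uniform lower bound $\LL\ge\LL_\mmin$, and $R\ge 0$, $\keta\ge 0$. Hence on $\{F\le C\}$ we have $R(\tbg)\le C-\LL_\mmin$, which places $\tbg$ in a bounded set by level compactness of $R$; the companion bound $\keta(\beta-\tbeta,\gam-\tgam)\le C-\LL_\mmin$ then bounds $\bg$.

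For $\mu>0$, I would first confine $\tbg$ by partial minimization of $F$ in $\bg$ and then use the strong convexity from Lemma \ref{lem:LL weak cvx} on the slice in $\bg$. The infimum over $\bg$ separates: the $\beta$-piece of $\keta$ contributes $0$, and the $\gam$-piece combines with $\phimu$ to yield the Moreau envelope $M_\phimu(\tgam):=\inf_{\gam}[\phimu(\gam)+(\eta/2)\|\gam-\tgam\|^2]$. The component-wise first-order conditions for $M_\phimu$ give $\eta\gam_i^2-\eta\tgam_i\gam_i-\mu=0$, from which the optimal $\gam$ tracks $\tgam$ within $O(1/\|\tgam\|)$ for large $\|\tgam\|$, producing the logarithmic lower bound $M_\phimu(\tgam)\ge -\mu q\ln(\|\tgam\|+1)-K$ for some constant $K$. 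Combined with $\LL\ge\LL_\mmin$, I get
\[
F(\bg,\tbg)\ge \LL_\mmin+M_\phimu(\tgam)+R(\tbg),
\]
and since $R$ is $1$-coercive, the linear growth of $R(\tbg)$ dominates the logarithmic decay $-M_\phimu(\tgam)\le C_1\ln(\|\tbg\|+1)+C_2$, forcing $\tbg$ into a bounded set on $\{F\le C\}$. With $\tbg$ confined to a compact $K_0$, Lemma \ref{lem:LL weak cvx} gives $(\eta-\bar\eta)$-strong convexity of the slice $\bg\mapsto\LL(\bg)+\phimu(\gam)+\keta(\beta-\tbeta,\gam-\tgam)$, uniformly in $\tbg\in K_0$. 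Since $\uemu$ and its unique argmin are continuous on $K_0$ (hence bounded there), strong convexity yields a uniform bound on $\bg$ as well.

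The anticipated obstacle is the variable mismatch between $\phimu$ (which acts on $\gam$ and is unbounded below) and $R$ (which imposes coercivity only on $\tbg$). The quadratic coupling $\keta$ is what bridges the two; the decisive quantitative fact is that the Moreau envelope of the log-barrier decays only logarithmically at infinity, which the $1$-coercivity of $R$ easily absorbs. The $\mu=0$ case, by contrast, is essentially a short corollary of Theorem \ref{thm:basic existence2}.
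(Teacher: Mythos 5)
Your proof is correct, and the decisive estimate is the same one the paper uses --- the quadratic coupling lets the linear growth of a $1$-coercive $R$ absorb the (merely logarithmic) unboundedness of $\phimu$ --- but the organization is genuinely different. The paper argues on a minimizing sequence: starting from the lower bound \eqref{eq:lower bd} it completes the square in $\gam^k-\tgam^k$ to transfer the linear coercivity of $R$ from $\tgam^k$ onto $\gam^k$, first ruling out $v^*=-\infty$, then showing $\{\gam^k\}$ is bounded, then bounding $\tbg^k$ (via level-boundedness of $R$) and finally $\beta^k$ through the coupling, and concluding by lower semicontinuity. You instead establish level-boundedness of the joint objective and invoke Weierstrass, pushing the logarithm in the opposite direction: partial minimization in $\gam$ replaces $\phimu$ by its Moreau envelope $M_{\phimu}$, whose $O(\ln\norm{\tgam})$ decay at infinity is dominated by $R$, so $\tbg$ is bounded first; the paper's completing-the-square and your envelope computation (the FOC $\eta\gam_i^2-\eta\tgam_i\gam_i-\mu=0$) are two packagings of the same inequality. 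Your route buys a cleaner statement (compact sublevel sets of the joint objective, subsuming both the $v^*>-\infty$ step and the boundedness step, with $\mu=0$ as a two-line corollary); the paper's route avoids any appeal to properties of the inner minimization.

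Two points to tighten. First, your asymptotic analysis of $M_{\phimu}$ covers only large positive $\tgam_i$; you should note that for $\tgam_i\le 0$ (or bounded) one has $(\gam_i-\tgam_i)^2\ge\gam_i^2$, so the one-dimensional envelope is bounded below by the constant $\inf_{g>0}\bigl[-\mu\ln(g/\mu)+\tfrac{\eta}{2}g^2\bigr]$, after which the stated bound $M_{\phimu}(\tgam)\ge-\mu q\ln(\norm{\tgam}+1)-K$ holds on all of $\Rq$. Second, your final step appeals to the continuity of $\uemu$ and $\SS_\emu$, which is Theorem \ref{thm:u cont} and appears after this theorem in the paper; this is not circular (that theorem only needs the strong convexity of Lemma \ref{lem:LL weak cvx}, not the existence result being proved), but it is heavier than necessary. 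Once $\tbg$ is confined to a compact set, you can bound $\bg$ directly: pair half of the quadratic coupling with $\phimu$ again to get a uniform lower bound, leaving $\tfrac{\eta}{4}\norm{\gam-\tgam}^2+\tfrac{\eta}{2}\norm{\beta-\tbeta}^2$ bounded on the sublevel set, or use the strong-convexity growth bound exactly as in the proof of Lemma \ref{lem:level bdd}. This keeps the argument self-contained at this point of the development.
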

\begin{proof}
Let 
\(
v^* 
\)
be the optimal value in \eqref{eq:relax2}
and let
\(
\!\{(\bgk,\tbgk)\}\!\subset (\Rp\times\Rqp)^2
\) 
be such that
\(
\LL_\emu(\bgk,\tbgk)\!+\!R\tbgk\downarrow v^*\!.
\)
By \eqref{eq:eigbd} and \eqref{eq:eig1}, 
it must be the case that 
\begin{equation}\label{eq:lower bd}
\begin{aligned}
\LL_\emu&(\bgk, \tbgk)+R\tbgk
\\ &\ge
\frac{n\!+\!1}{2}\ln(\talf)+\phimu(\gam^k)+
\keta(\beta^k-\tbeta^k,\gam^k-\tgam^k)+R\tbgk
\\ &\ge
\frac{n\!+\!1}{2}\ln(\talf)+\phimu(\gam^k)+\frac{\bar\eta}{2}\norm{\gam^k-\tgam^k}^2
+R\tbgk.
\end{aligned}
\end{equation}

If $v^*=-\infty$, then \eqref{eq:lower bd} tells us that
\begin{equation}\label{eq:minus infty1}
\phimu(\gam^k)
+\frac{\blam}{2}\norm{\gam^k-\tgam^k}^2+R\tbgk\rightarrow-\infty.
\end{equation}
This in turn implies that $\mu>0$, $\phimu(\gam^k)\rightarrow-\infty$
and $\norm{\gam^k}\rightarrow \infty$.
Since $R$ is 1-coercive and $\norm{\gam^k}\rightarrow \infty$, 
we can assume with no loss in generality that there is an $\balf>0$
such that $R\tbgk\ge \balf \sum_{i=1}^q\tgam^k_i$ for all $k\in\N$. 
Consequently,
\eq{\label{eq:unbdd1}\begin{aligned}
\phimu(\gam^k)
+&\frac{\blam}{2}\norm{\gam^k-\tgam^k}^2+R\tbgk
\\ &\ge
\sum_{i=1}^q\left(-\mu\ln (\gam^k_i/\mu)+\frac{\blam}{2}(\gam^k_i-\tgam^k_i)^2
+\balf\tgam^k_i\right)
\\ &= 
\sum_{i=1}^q\left(
(-\mu\ln (\gam^k_i/\mu)+\balf\gam^k_i)+\frac{\blam}{2}(\gam^k_i-\tgam^k_i)^2
-\balf (\gam^k_i-\tgam^k_i)
\right)
\\ &=
\sum_{i=1}^q\left(
(-\mu\ln (\gam^k_i/\mu)+\balf\gam^k_i)+\frac{\blam}{2}
\left[(\gam^k_i-\tgam^k_i-\frac{\balf}{\blam})^2-(\frac{\balf}{\blam})^2\right]
\right)
\\ &\ge -q\frac{\balf^2}{2\blam}+\sum_{i=1}^q
(-\mu\ln(\gam^k_i/\mu)+\balf\gam^k_i)
\\ &= -q\frac{\balf^2}{2\blam}+\phimu(\gam^k)+\balf\norm{\gam^k}_1\ 
\rightarrow +\infty,
\end{aligned}}
which is a contradiction. Hence $v^*>-\infty$.

Let $\rho> v^*>-\infty$. If $\{\gam^k\}\subset\Rqp$ is unbounded, we may assume with no
loss in generality that $\norm{\gam^k}\rightarrow+\infty$. 
If $\mu=0$, then, by  \eqref{eq:lower bd},
\(
\rho>\frac{n\!+\!1}{2}\ln(\talf)+R\tbgk
\uparrow +\infty,
\)
a contradiction, and so we can assume that $\mu>0$ and $R$ is 1-coercive.
Using \eqref{eq:lower bd} we may proceed as in \eqref{eq:unbdd1} to find that
\begin{equation}\label{eq:lbd2}
\rho > \frac{n\!+\!1}{2}\ln(\talf)
-q\frac{\balf^2}{2\blam}+\sum_{i=1}^q
(-\mu\ln \gam^k_i+\balf\gam^k_i)\ 
\rightarrow +\infty,
\end{equation}
again a contradiction, so the sequence $\{\gam^k\}$ is bounded.
Therefore, the first inequality in \eqref{eq:lower bd} tells us that the entire sequence 
$\{(\bgk,\tbgk)\}$ is necessarily bounded.
Consequently, a limit point of the sequence $\{(\bgk,\tbgk)\}$ exists and,
since $R$ is lsc, any such
limit point is a solution to 
\eqref{eq:relax2}.
\end{proof}

Next we fix $\mu\ge 0$ and show that as 
$\eta\uparrow\infty$ the solutions to 
\eqref{eq:relax2} converge to solutions of
\begin{equation}\label{eq:log-barrier problem}
\min_{\bg\in\CC}\LL\bg+\phimu(\gam)+R\bg.
\end{equation}
In particular, for $\mu=0$, they converge to solutions of 
\eqref{eq:vs_lme}.

\begin{theorem}[Consistency as $\eta\rightarrow\infty$]\label{thm:eta consistency}
Let $\LL$ and $R$ be as in Theorem \ref{thm:relaxed existence}
and fix $\mu\ge 0$.
Let $\{\eta_k\}\subset\R_{++}$ be such that $\eta_k<\eta_{k+1}$
with $\eta_k\uparrow\infty$, and let $(\bgk,\tbgk)$ be an optimal solution to 
\eqref{eq:relax2} for $(\emu)=(\eta_k,\mu)$, $k\in\N$.
Then any limit point (equivalently, cluster point) 
$(\bbg,\hbg)$ of $\{(\bgk,\tbgk)\}$ satisfies
$\bbg=\hbg$ with $\bbg$ being an optimal solution to 
\eqref{eq:log-barrier problem}.
\end{theorem}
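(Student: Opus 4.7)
The plan is to test the optimality of each $(\bgk,\tbgk)$ against a ``diagonal'' candidate built from a minimizer of the log-barrier problem \eqref{eq:log-barrier problem}: the coupling term $\keta$ vanishes at diagonal configurations, yielding a uniform bound on $\LL+\phimu+R$ along the sequence, from which both $\bbg=\hbg$ and the optimality of $\bbg$ will follow as $\eta_k\uparrow\infty$.

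First I would observe that \eqref{eq:log-barrier problem} has an optimal solution $(\beta^*,\gam^*)\in\CC$ with value $p^*$: apply Theorem \ref{thm:basic existence2} with $\hR:=\phimu+R$, which is lsc and level compact by Lemma \ref{lem:lb plus 1c} when $\mu>0$ and by the level compactness assumption on $R$ when $\mu=0$. Plugging the diagonal choice $(\beta,\gam,\tbeta,\tgam)=(\beta^*,\gam^*,\beta^*,\gam^*)$ into \eqref{eq:relax2} (feasible since $\gam^*\in\R^q_+$) and invoking optimality of $(\bgk,\tbgk)$ at $\eta=\eta_k$ produces the basic inequality
\begin{equation}\label{eq:plan_key}
\LL\bgk+\phimu(\gam^k)+\frac{\eta_k}{2}\norm{(\beta^k-\tbeta^k,\gam^k-\tgam^k)}^2+R\tbgk\le p^*.
\end{equation}

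Pass to the subsequence on which $(\bgk,\tbgk)\to(\bbg,\hbg)$. Boundedness of this subsequence, together with the lower bound on $\LL$ (from \eqref{eq:eigbd}--\eqref{eq:eig1}), the nonnegativity of $R$, and the boundedness below of $\phimu$ on bounded subsets of $\R^q_+$, converts \eqref{eq:plan_key} into a uniform upper bound on the coupling term. Since $\eta_k\uparrow\infty$, this forces $\norm{(\beta^k-\tbeta^k,\gam^k-\tgam^k)}\to 0$, hence $\bbg=\hbg$. To identify $\bbg$ as optimal, I would take $\liminf$ in \eqref{eq:plan_key}, discard the nonnegative coupling term, invoke continuity of $\LL$ on $\R^p\times\R^q_+$ (noted just after \eqref{eq:LLlam}), and use lower semicontinuity of $R$ at $\hbg=\bbg$. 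For $\phimu$: when $\mu=0$ it vanishes on $\R^q_+$; when $\mu>0$ the same inequality \eqref{eq:plan_key} also caps $\phimu(\gam^k)$ uniformly from above, ruling out $\bgam_i=0$ (else the log-barrier would diverge to $+\infty$) and placing $\bgam$ in the interior of $\dom\phimu$ where continuity yields $\phimu(\gam^k)\to\phimu(\bgam)$. Combining these gives $\LL\bbg+\phimu(\bgam)+R\bbg\le p^*$, and since $\bbg\in\CC$ by closedness of $\R^q_+$, $\bbg$ attains $p^*$ and solves \eqref{eq:log-barrier problem}.

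The main obstacle is precisely the boundary case $\mu>0$: before continuity of the log-barrier at $\bgam$ can be invoked, we must use \eqref{eq:plan_key} itself to rule out $\bgam_i=0$, which is why the lower bounds for $\LL$, $R$, and $\phimu$ along the bounded subsequence have to be catalogued prior to passing to the limit. The rest of the argument is essentially bookkeeping with lsc and continuity.
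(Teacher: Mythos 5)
Your proposal is correct, and its central device is the same as the paper's: test the optimality of $(\bgk,\tbgk)$ against the diagonal candidate $(\xmu,\xmu)$ built from a minimizer of \eqref{eq:log-barrier problem}, which kills the coupling term and bounds the relaxed objective values by the optimal value $\vmu$ of \eqref{eq:log-barrier problem}. Where you diverge is in how the coupling gap is forced to vanish. The paper runs a monotonicity argument on the full sequence (using $\eta_k<\eta_{k+1}$ to show the relaxed values $a_k(x^k,w^k)$ and the barrier-plus-loss values $b_k(x^k,w^k)$ are nondecreasing and $\norm{x^k-w^k}$ is nonincreasing, then deduces $\norm{x^k-w^k}=2\eta_k^{-1}[a_k-b_k]^{1/2}\rightarrow 0$), whereas you pass to the convergent subsequence first and bound $\frac{\eta_k}{2}\norm{(\beta^k-\tbeta^k,\gam^k-\tgam^k)}^2$ directly by $\vmu$ minus lower bounds on $\LL$ (from \eqref{eq:eigbd}--\eqref{eq:eig1}), on $R\ge 0$, and on $\phimu$ over the bounded set of $\gam^k$'s, so that $\eta_k\uparrow\infty$ forces the gap to zero. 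Your route is more elementary, needs no monotonicity of $\{\eta_k\}$, and suffices for the stated conclusion about limit points; the paper's route yields the additional structural information that the values and coupling distances evolve monotonically along the whole sequence. Two small remarks: your separate interior-point argument for $\bgam\in\R^q_{++}$ when $\mu>0$ is correct but unnecessary, since $\phimu$ is lsc and the uniform upper bound already forces $\phimu(\bgam)<\infty$ in the limit inequality (this is exactly how the paper concludes, via lower semicontinuity of $\LL+\phimu+R$ and optimality of $\vmu$); and, like the paper, your comparison step implicitly uses that \eqref{eq:log-barrier problem} has a finite optimal value attained at some $\xmu$, which both arguments obtain from Lemma \ref{lem:lb plus 1c} and Theorem \ref{thm:basic existence2} with $\hR=\phimu+R$, so you are on equal footing there.
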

\begin{proof}
With no loss in generality $\eta_k>\bar\eta$ for all $k$.
Set 
\[\left.\begin{aligned}
a_k(x,w)&:=\LL_{\eta_k,\mu}(x,w)+R(w)\\ 
b_k(x,w)&:=\LL(x)+\phimu(\gam)+R(w)\\
c_k(x,w)&:=\kappa_{\eta_k}(\beta-\tbeta,\gam-\tgam)
\end{aligned}\right\}\quad \forall k\in\N,
\] 
where 
$x=(\beta,\gam)$ and $w=(\tbeta,\tgam)$ with $\kappa_\eta$ defined in
\eqref{eq:kappa}. Set $x^k=\bgk,\ \bx=\bbg,\ w^k=\tbgk$ and $\hw=\hbg$.
By Lemma \ref{lem:lb plus 1c} and Theorem \ref{thm:basic existence2} 
with $\hR=\phimu+R$, 
there is an optimal solution $\xmu$ 
to \eqref{eq:log-barrier problem} 
yielding an optimal value of $\vmu$ for which 
$a_k(x^k,w^k)\le a_k(\xmu,\xmu)=\vmu$ for all $k\in\N$. Hence, the sequence
$\{a_k(x^k,w^k)\}$ is upper bounded by $\vmu$.
Since
\[
a_k(x^k,w^k)\le a_k(x^{k+1},w^{k+1})\le a_{k+1}(x^{k+1},w^{k+1}),
\]
there exists $\tv$ such that
$a_k(x^k,w^k)\uparrow\tv\le \vmu$.
Next, observe that
\[
a_k(x^k,w^k) \le a_k(x^{k+1},w^{k+1})\ \ \text{ and }\ \ 
a_{k+1}(x^{k+1},w^{k+1})\le a_{k+1}(x^{k},w^{k}).
\]
By adding these inequalities together we find that 
$\norm{x^{k+1}-w^{k+1}}\le \norm{x^{k}-w^{k}}$ so that 
$\norm{x^{k}-w^{k}}\downarrow\tkappa$ for some $\tkappa\ge 0$.
We also have
\[\begin{aligned}
b_k(x^k,w^k)+(\eta_k/2)\norm{x^{k}-w^{k}}
&=a_k(x^k,w^k)
\\ &\le a_k(x^{k+1},w^{k+1})
\\ &=b_{k+1}(x^{k+1},w^{k+1})+(\eta_k/2)\norm{x^{k+1}-w^{k+1}}
\\ &\le b_{k+1}(x^{k+1},w^{k+1})+(\eta_k/2)\norm{x^{k}-w^{k}},
\end{aligned}\]
which gives $b_k(x^k,w^k)\le b_{k+1}(x^{k+1},w^{k+1})\le \tv$.
Therefore, $b_k(x^k,w^k)\uparrow \hv$ for some $\hv\le\tv$.
Consequently,
\[
\tkappa=\lim_k\norm{x^{k}-w^{k}}=
\lim_k\eta_k^{-1}[a_k(x^k,w^k)-b_k(x^k,w^k)]=0.
\]
Therefore, if $(\bx,\bw)$ is any limit point of the sequence $\{(x^k,w^k)\}$,
then $\bx=\bw$ and $\LL(\bx)+\phimu(\bgam)+R(\bx)=\vmu$ 
since $\LL(x^k)+\phimu(\gam^k)+R(w^k)\le a_k(x^k,w^k)\le \vmu$
for all $k\in\N$.
\end{proof}
We now pair Theorem \ref{thm:eta consistency} with a consistency result for the barrier parameter $\mu$.

\begin{theorem}[Consistency as $\mu\rightarrow 0$]
\label{thm:mu consistency}
Let $\LL$ and $R$ be as in Theorem \ref{thm:relaxed existence}. 
For every $\mu\ge 0$, problem 
\eqref{eq:log-barrier problem} has a solution $(\beta_\mu,\gam_\mu)$.
Moreover, if $\{\mu_k\}\subset\R_{++}$ is such that $\mu_k\downarrow 0$,
then the sequence $\{(\beta_{\mu_k},\gam_{\mu_k})\}$ is bounded
and every limit point of the sequence 
is a 
solution to \eqref{eq:vs_lme}.
\end{theorem}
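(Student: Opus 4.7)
The three assertions are handled in sequence. For existence, set $\hR:=\phimu+R$. When $\mu>0$, Lemma~\ref{lem:lb plus 1c} shows $\hR$ is lsc and level compact, so Theorem~\ref{thm:basic existence2} applied to $\LL+\hR$ yields a minimizer $(\beta_\mu,\gam_\mu)$. When $\mu=0$, $\phi_0=\del_{\R^q_+}$ so \eqref{eq:log-barrier problem} reduces to \eqref{eq:vs_lme}; level compactness of $R+\del_{\R^q_+}$ follows from 1-coercivity of $R$, and Theorem~\ref{thm:basic existence2} applies again.

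For boundedness of $\{(\beta_{\mu_k},\gam_{\mu_k})\}$, fix a strictly feasible reference $(\beta^\circ,\gam^\circ)\in\Rp\times\R^q_{++}$ and use optimality of $(\beta_{\mu_k},\gam_{\mu_k})$ to obtain
\[
\LL(\beta_{\mu_k},\gam_{\mu_k})+\phi_{\mu_k}(\gam_{\mu_k})+R(\beta_{\mu_k},\gam_{\mu_k})\le
\LL(\beta^\circ,\gam^\circ)+\phi_{\mu_k}(\gam^\circ)+R(\beta^\circ,\gam^\circ).
\]
Since $\phi_{\mu_k}(\gam^\circ)=-\mu_k\sum_i\ln\gam^\circ_i+q\mu_k\ln\mu_k\to 0$, the right-hand side is uniformly bounded above by some constant $M$. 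The inclusion $\Omega_i(\gam)\succeq\Lam_i$ yields the uniform lower bound $\LL\ge\half\sum_i\ln\det\Lam_i=:C_\LL$, so $\phi_{\mu_k}(\gam_{\mu_k})+R(\beta_{\mu_k},\gam_{\mu_k})\le M-C_\LL$. A contradiction argument along the lines of \eqref{eq:unbdd1} in the proof of Theorem~\ref{thm:relaxed existence}---1-coercivity of $R$ absorbing the log-barrier via the uniform lower bound $-\mu\ln(\gam_i/\mu)+\alpha\gam_i\ge \mu(1+\ln\alpha)$ attained at $\gam_i=\mu/\alpha$, combined with the fact that the linear term $\alpha\gam_i$ dominates $\mu_k\ln\gam_i$ for any $\mu_k\to 0$ when $\gam_i\to\infty$---then forces boundedness of both $\{\gam_{\mu_k}\}$ and $\{\beta_{\mu_k}\}$.

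For the final assertion, pass to a subsequence with $(\beta_{\mu_k},\gam_{\mu_k})\to(\bar\beta,\bar\gam)$; the containment $\bar\gam\ge 0$ is automatic. The key estimate is $\liminf_k\phi_{\mu_k}(\gam_{\mu_k})\ge 0$: decompose $\phi_{\mu_k}(\gam_{\mu_k})=-\mu_k\sum_i\ln\gam_{\mu_k,i}+q\mu_k\ln\mu_k$; the second term vanishes, coordinates with $\bar\gam_i>0$ contribute $-\mu_k\ln\gam_{\mu_k,i}\to 0$, and coordinates with $\bar\gam_i=0$ have $\gam_{\mu_k,i}\to 0^+$ so $-\mu_k\ln\gam_{\mu_k,i}\ge 0$ eventually. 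For any $(\beta,\gam)\in\Rp\times\R^q_{++}$, optimality of $(\beta_{\mu_k},\gam_{\mu_k})$ gives
\[
\LL(\beta_{\mu_k},\gam_{\mu_k})+\phi_{\mu_k}(\gam_{\mu_k})+R(\beta_{\mu_k},\gam_{\mu_k})\le \LL(\beta,\gam)+\phi_{\mu_k}(\gam)+R(\beta,\gam),
\]
and taking $\liminf$ on the left (using continuity of $\LL$ on its domain, lsc of $R$, and the barrier estimate) against the limit $\phi_{\mu_k}(\gam)\to 0$ on the right yields $\LL(\bar\beta,\bar\gam)+R(\bar\beta,\bar\gam)\le\LL(\beta,\gam)+R(\beta,\gam)$. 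Extension from the strict interior $\{\gam>0\}$ to boundary points $\gam\ge 0$ proceeds by the perturbation $\gam\mapsto\gam+n^{-1}\mathbf{1}$ together with continuity of $\LL$ and $R$ on $\dom R$. The main obstacle is the uniform-in-$\mu$ bookkeeping of the barrier term: controlling $\phi_{\mu_k}(\gam_{\mu_k})$ both from above (for boundedness) and from below (for $\liminf\ge 0$) requires careful tracking of the asymptotic behavior of the ratios $\gam_{\mu_k,i}/\mu_k$ coordinate by coordinate as the sequence approaches the boundary of $\R^q_+$, while the final extension step relies on the continuity of $R$ typical of the practical regularizers (such as $\ell_1$, SCAD, and MCP) covered by this framework.
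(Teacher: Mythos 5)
Your proposal is correct, and its overall architecture (existence via Lemma~\ref{lem:lb plus 1c} and Theorem~\ref{thm:basic existence2}, boundedness of the barrier trajectory, then a liminf comparison against strictly feasible points followed by an extension to the boundary) matches the paper, but two of the three steps are argued by genuinely different means. For boundedness, the paper runs a Fiacco--McCormick-style monotonicity argument: pairwise optimality comparisons between consecutive minimizers show that $\sum_i\ln(\gam^k_i)$ and then $\tLL(\beta^k,\gam^k)$, with $\tLL:=\LL+R+\del_{\Rp\times\Rqp}$, are non-increasing, so the iterates stay in a single compact sublevel set of $\tLL$ (level compactness coming from Theorem~\ref{thm:basic existence2}). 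You instead compare each $(\beta_{\mu_k},\gam_{\mu_k})$ to one fixed strictly feasible reference point, note $\phi_{\mu_k}(\gam^\circ)\to 0$, lower-bound $\LL$ uniformly, and then force boundedness by the same barrier-absorption estimate as in \eqref{eq:unbdd1}; this is more quantitative (it bounds $\phi_{\mu_k}+R$ along the trajectory) at the cost of explicitly needing a point of $\dom R\cap(\Rp\times\R^q_{++})$ with finite objective value, a requirement the paper's route also needs implicitly for the barrier problems to be nondegenerate. For the limit-point step, the paper gets $\liminf_k\phi_{\mu_k}(\gam^k)\ge 0$ by invoking joint lower semicontinuity of the perspective function $\varphi$, while you prove the same estimate by hand, coordinate by coordinate, splitting according to whether $\bar\gam_i>0$ or $\bar\gam_i=0$; both are valid. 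Finally, the extension of the optimality inequality from $\gam\in\R^q_{++}$ to $\gam\in\R^q_+$ is the same in both arguments and carries the same caveat: the paper simply asserts continuity of $\tLL$ on $\Rp\times\Rqp$, which for a general lsc $R$ is an extra hypothesis, whereas you make the reliance on continuity of $R$ on its domain explicit --- an honest improvement in bookkeeping, not a gap peculiar to your proof.
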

\begin{proof}
The existence of $(\beta_\mu,\gam_\mu)$ for all $\mu\ge 0$ 
follows immediately from 
Lemma \ref{lem:lb plus 1c} and Theorem \ref{thm:basic existence2} with
$\hR=R+\phimu$. 
Let $\mu_k\downarrow 0$
and set $(\beta^k,\gam^k):=(\beta_{\mu_k},\gam_{\mu_k})$.
Set $\tLL:=\LL+R+\del_{\Rp\times\Rqp}$ so that 
the objective in \eqref{eq:log-barrier problem} is $\tLL+\phimu$
and the objective in \eqref{eq:vs_lme} is $\tLL$
with $(\beta_0,\gam_0)$ a solution to \eqref{eq:vs_lme}
by definition. 
Observe that
\[\begin{aligned}
\tLL\bgk+\phi_{\mu_k}(\gam^k)&\le 
\tLL(\beta^{k+1},\gam^{k+1})+\phi_{\mu_k}(\gam^{k+1})\quad \text{and}
\\
\tLL(\beta^{k+1},\gam^{k+1})+\phi_{\mu_{k+1}}(\gam^{k+1})&\le 
\tLL(\beta^{k},\gam^{k})+\phi_{\mu_{k+1}}(\gam^k)
\end{aligned}\]
Summing these inequalities yields the inquality
\[
(\mu_k-\mu_{k+1})\sum_{i=1}^q\ln(\gam^{k}_i)
\ge 
(\mu_k-\mu_{k+1})\sum_{i=1}^q\ln(\gam^{k+1}_i),
\]
so $\{\sum_{i=1}^q\ln(\gam^{k}_i)\}$ is a non-increasing sequence.
Therefore,
\[\begin{aligned}
\tLL(\beta^{k+1},\gam^{k+1})+\phi_{\mu_{k+1}}(\gam^{k+1})&\le 
\tLL(\beta^{k},\gam^{k})+\phi_{\mu_{k+1}}(\gam^k)
\\ &\le
\tLL(\beta^{k},\gam^{k})+\phi_{\mu_{k+1}}(\gam^{k+1})
\end{aligned}\]
which implies that $\{\tLL(\beta^{k},\gam^{k})\}$
is also a non-increasing sequence and bounded below by 
$\tLL(\beta_0,\gam_0)$. 
Since
Theorem \ref{thm:basic existence2} tells us that $\tLL$ is level compact,
the sequence $\{\bgk\}$ is bounded. 
Let $(\bbeta,\bgam)\in\Rp\times\Rqp$ be any
limit point of $\{\bgk\}$ and let $J\subset\N$ be such that
$\bgk\overset{J}{\rightarrow}(\bbeta,\bgam)$. Then
\[
\tLL\bgk+\phi_{\mu_k}(\gam^k)\le
\tLL\bg+\phi_{\mu_k}(\gam) \quad \forall\,\bg\in\Rp\times\R^q_{++}.
\]
Since $\tLL$ is continuous on $\Rp\times\Rqp$ and the perspective function
$\phi_{\mu}(\gam)=\varphi(\mu,\gam)$ is lsc on $\Rp\times\Rqp$, we have
\[
\tLL(\bbeta,\bgam)\le\liminf_{k\in J}(\tLL\bgk+\phi_{\mu_k}(\gam^k))\le
\tLL\bg \quad \forall\,\bg\in\Rp\times\R^q_{++}.
\]
Consequently, the continuity of $\tLL$ on $\Rp\times\Rqp$ implies
that $(\bbeta,\bgam)$ solves \eqref{eq:vs_lme}.
\end{proof}

\subsection{The continuity and differentiability of $\uemu$}

The continuity of $\uemu$ is closely tied to the continuity of 
the associated solution mapping
$\map{\SS_\emu}{\Rp\times\Rq}{\Rp\times\dom(\phimu)}$ given by 
\begin{equation}\label{eq:argmin for u}
\SS_\emu(\tbeta,\tgam):= 
\argmin_{(\beta,\gam) } \LL_{\emu}(\bg,\tbg)\ .
\end{equation}

\begin{theorem}[Continuity of $\uemu$ and $\SS_\emu$]\label{thm:u cont}
Let the assumptions of Theorem \ref{thm:relaxed existence} hold.
For every $(\mu,\eta)\in\R_+\times\R_{++}$,
the function $\uemu$ defined in \eqref{eq:value_function_definition}
is well-defined and continuous on $\Rp\times\Rq$.
In addition, the solution mapping $\SS_\emu$
is well-defined, single-valued and continuous on $\Rp\times\Rq$.
\end{theorem}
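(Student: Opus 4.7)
The plan is to combine the strong convexity of the inner minimization from Lemma~\ref{lem:LL weak cvx} with a Berge-type stability argument, being careful that for $\mu>0$ the effective domain of the inner objective is the open orthant $\Rp\times\R^q_{++}$.

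For well-definedness, fix $\tbg\in\Rp\times\Rq$ and set $F(\bg):=\LL_\emu(\bg,\tbg)$. Under the standing assumption~\eqref{eq:blam1}, Lemma~\ref{lem:LL weak cvx} gives that $F$ is strongly convex with modulus $\eta-\bar\eta>0$ on its effective domain $\Rp\times\dom\phimu$. Since $\det\Omega_i(\gam)\ge\det\Lambda_i>0$ for $\gam\ge 0$ and the data-fitting quadratics in $\LL$ are nonnegative, $\LL$ is uniformly bounded below on $\Rp\times\R^q_+$. The coupling term $\keta(\beta-\tbeta,\gam-\tgam)$ is quadratically coercive in $\bg$, and $\phimu$ is a proper closed convex function that is bounded below on any bounded subset of $\R^q_{++}$ while blowing up at the boundary when $\mu>0$. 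These observations combine to show that $F$ is closed, proper, and coercive on $\Rp\times\Rq$, so its unique minimizer exists; hence $\uemu\tbg$ and $\SS_\emu(\tbeta,\tgam)$ are well-defined and single-valued on all of $\Rp\times\Rq$.

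For continuity of $\SS_\emu$, let $\tbgk\to(\bar{\tilde\beta},\bar{\tilde\gam})$ and write $\bgk:=\SS_\emu(\tbgk)$. Fix any reference point $(\beta^\sharp,\gam^\sharp)\in\Rp\times\R^q_{++}$. The optimality comparison
\[
\LL_\emu(\bgk,\tbgk)\ \le\ \LL_\emu((\beta^\sharp,\gam^\sharp),\tbgk),
\]
combined with joint continuity of $\LL_\emu$ at the reference point, yields a uniform upper bound $M$ on the left-hand side. Together with the coercive lower bound from the previous step, this forces $\{\bgk\}$ to be bounded. Pass to a convergent subsequence with limit $(\beta^\infty,\gam^\infty)$. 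In the barrier case $\mu>0$, the uniform upper bound $M$ together with the continuity of $\LL$ on $\Rp\times\R^q_+$ rules out $\gam^k_i\to 0$: otherwise $\phimu(\gam^k)\to+\infty$, a contradiction. Hence $\gam^\infty\in\dom\phimu$ (in the case $\mu=0$ the effective domain $\R^q_+$ is already closed, so nothing further is needed). Continuity of $\LL_\emu$ at this limit allows passage to the limit in the optimality inequality $\LL_\emu(\bgk,\tbgk)\le\LL_\emu(\bg,\tbgk)$ valid for every $\bg$, showing $(\beta^\infty,\gam^\infty)=\SS_\emu(\bar{\tilde\beta},\bar{\tilde\gam})$ by the uniqueness established above. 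Since every subsequence has a further subsequence with the same unique limit, the whole sequence $\bgk$ converges.

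Finally, the identity $\uemu\tbg=\LL_\emu(\SS_\emu\tbg,\tbg)$, continuity of $\SS_\emu$ just established, and joint continuity of $\LL_\emu$ on $(\Rp\times\dom\phimu)\times\Rp\times\Rq$ together give continuity of $\uemu$ by composition. The delicate step is precisely the barrier-case argument preventing cluster points of $\{\gam^k\}$ from escaping to the boundary of $\R^q_{++}$; without it the limit would fall outside $\dom\phimu$, so strong convexity could not be invoked to identify it with the argmin and the continuity argument would collapse. The uniform upper bound obtained from an arbitrary fixed interior reference point is what rescues the argument.
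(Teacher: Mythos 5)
Your proof is correct and follows essentially the same route as the paper: uniqueness and well-definedness of $\SS_\emu$ from strong convexity (Lemma \ref{lem:LL weak cvx}), boundedness of the minimizer sequence for a convergent parameter sequence, identification of every limit point with the unique minimizer, and continuity of $\uemu$ by composing $\LL_\emu$ with $\SS_\emu$. The only real difference is technical: you obtain boundedness from the optimal-value bound at a fixed interior reference point plus (parameter-uniform) coercivity and then exclude boundary limit points explicitly via the barrier blow-up, whereas the paper gets boundedness from the strong-convexity inequality centered at $\SS_\emu(\tbeta^*,\tgam^*)$ and handles the boundary implicitly through lower semicontinuity when passing to the limit; both devices are sound.
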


\begin{proof}
Since $\eta>\blam= \nu m$, Lemma \ref{lem:LL weak cvx} 
tells us that the objective in \eqref{eq:value_function_definition}
is strongly convex, 
and so \eqref{eq:value_function_definition} has a unique solution. 
Consequently, $\SS_\emu$ is well-defined and single-valued
on $\Rp\times\Rq$.
This implies that $\uemu$ is also well defined on $\Rp\times\Rq$ since
\[
\uemu(\tbeta,\tgam)= \LL_{\emu}(\SS_\emu(\tbeta,\tgam),\tbg)
    \quad\forall\, (\tbeta,\tgam)\in \Rp\times\Rq.
\]
The result follows once it is shown that $\SS_\emu$ is continuous.

Let $\{(\tbeta^k,\tgam^k)\}\subset\Rp\times\Rq$ and 
$(\tbeta^*,\tgam^*)\in\Rp\times\Rq$ be such that 
$(\tbeta^k,\tgam^k)\rightarrow(\tbeta^*,\tgam^*)$.  
Set $(\hbeta^k,\hgam^k):=\SS_\emu(\tbeta^k,\tgam^k),\ k\in\N$
and $(\bbeta,\bgam)=\SS_\emu(\tbeta^*,\tgam^*)$. 
We must show that $(\hbeta^k,\hgam^k)\rightarrow (\bbeta,\bgam)$.
We begin by showing that the sequence $\{(\hbeta^k,\hgam^k)\}$
is bounded.
By Lemma \ref{lem:LL weak cvx}, the mapping $\bg\mapsto \LL_{\emu}(\bg,\tbg)$ is strongly convex with modulus of strong convexity $\eta$ for all $\tbg\in\Rp\times\Rq$.
In particular, this implies that
\eq{\label{eq:bd limit}\begin{aligned}
\LL_{\emu}((\bbeta,\bgam),\tbgk)&+
\ip{\nabla_{\bg} \LL_{\emu}((\bbeta,\bgam),\tbgk)}
{(\hbeta^k,\hgam^k)-(\bbeta,\bgam)}
\\ &+
\frac{\eta}{2}\norm{(\hbeta^k,\hgam^k)-(\bbeta,\bgam)}^2
\\ &\le
\LL_{\emu}((\hbeta^k,\hgam^k),\tbgk)
\\ &\le
\LL_{\emu}((\bbeta,\bgam),\tbgk).
\end{aligned}}
Since $(\tbeta^k,\tgam^k)\rightarrow(\tbeta^*,\tgam^*)$ and both
$\nabla_{\bg} \LL_{\emu}((\bbeta,\bgam),\cdot)$ and 
$\LL_{\emu}((\bbeta,\bgam),\cdot)$ are continuous at 
$(\tbeta^*,\tgam^*)$, we can assume with no loss in generality that there 
is a constant 
$c>0$ such that
\[
\norm{\nabla_{\bg} \LL_{\emu}((\bbeta,\bgam),\tbgk)}\le c\text{ and }
|\LL_{\emu}((\bbeta,\bgam),\tbgk)|\le c\quad\forall\, k\in \N.
\]
Plugging this into \eqref{eq:bd limit} and simplifying gives
\[
\frac{\eta}{2}\norm{(\hbeta^k,\hgam^k)-(\bbeta,\bgam)}^2
\le c(1+\norm{(\hbeta^k,\hgam^k)-(\bbeta,\bgam)}).
\]
Therefore the sequence $\{(\hbeta^k,\hgam^k)\}$ must be bounded.

Let $(\beta_0,\gam_0)$ be any limit point of $\{(\hbeta^k,\hgam^k)\}$
and let $J\subset\N$ 
be such that
$(\hbeta^k,\hgam^k)\overset{J}{\rightarrow}(\beta_0,\gam_0)$. Then, by
the final inequality in 
\eqref{eq:bd limit}, we can take the limit in $k\in J$ to find
that 
\(
\LL_{\emu}((\beta_0,\gam_0),(\tbeta^*,\tgam^*))
\le
\LL_{\emu}((\bbeta,\bgam),(\tbeta^*,\tgam^*)).
\)
The uniqueness of $(\bbeta,\bgam)$ tells us that
$(\beta_0,\gam_0)=(\bbeta,\bgam)$. 
Since 
$(\beta_0,\gam_0)$ was any limit point of the bounded sequence 
$\{(\hbeta^k,\hgam^k)\}$,
we have $(\hbeta^k,\hgam^k)\rightarrow (\bbeta,\bgam)$ which
implies that $\SS_\emu$ is continuous on $\Rp\times\Rq$.
\end{proof}

We now consider the differentiability of $\uemu$. For this we make use of the
following lemma.

\begin{lemma}[Local uniform level boundedness of $\LL_\emu$]
\label{lem:level bdd}
Let $\mu\ge 0$, $\eta>\blam$ and suppose that the assumptions of Theorem \ref{thm:relaxed existence} hold.
Set $x=\bg$ and $w=\tbg$.
Then
the function
$\LL_\emu(\bg,\tbg)$ is level bounded in $\bg$ locally uniformly in $\tbg$
for all $\tbg\in \Rp\times\Rq$. That is, for every 
$\tbg\in\Rp\times\Rq$ and $\rho\in\R$,  
there are $\nu\in\R$ and $\eps>0$ such that
\(
\lset{\bg}{\LL_\emu(\bg,(\bbeta,\bgam)\le\rho}\subset \nu\B
\)
for all $\tbg\in(\bbeta,\bgam)+\eps\B$.
\end{lemma}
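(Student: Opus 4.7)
The plan is to leverage the strong convexity of $\bg\mapsto\LL_\emu(\bg,\tbg)$ established in Lemma \ref{lem:LL weak cvx} together with the continuity of the value function $\uemu$ and the solution map $\SS_\emu$ established in Theorem \ref{thm:u cont}. Strong convexity with modulus $\eta-\bar\eta>0$ will control the diameter of any sublevel set in terms of the optimal value and optimal argument, and continuity will make those two quantities uniformly bounded as $\tbg$ ranges over a small ball around $(\bbeta,\bgam)$.

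First, since $\eta>\bar\eta$, Lemma \ref{lem:LL weak cvx} gives that for every fixed $\tbg$ the map $\bg\mapsto\LL_\emu(\bg,\tbg)$ is strongly convex with modulus $\eta-\bar\eta$. Theorem \ref{thm:u cont} furnishes a unique minimizer $\SS_\emu(\tbg)$ with value $\uemu(\tbg)$, and the usual strong convexity inequality at the minimizer reads
\[
\LL_\emu(\bg,\tbg)\;\ge\; \uemu(\tbg)+\tfrac{\eta-\bar\eta}{2}\norm{\bg-\SS_\emu(\tbg)}^2
\qquad\forall\, \bg\in\Rp\times\dom(\phimu).
\]
Hence any $\bg$ with $\LL_\emu(\bg,\tbg)\le\rho$ must satisfy
\[
\norm{\bg-\SS_\emu(\tbg)}^2\;\le\;\tfrac{2}{\eta-\bar\eta}\bigl(\rho-\uemu(\tbg)\bigr),
\]
which in turn gives $\norm{\bg}\le\norm{\SS_\emu(\tbg)}+\sqrt{2(\rho-\uemu(\tbg))/(\eta-\bar\eta)}$.

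Next, fix $(\bbeta,\bgam)\in\Rp\times\Rq$. By Theorem \ref{thm:u cont}, both $\uemu$ and $\SS_\emu$ are continuous on $\Rp\times\Rq$, so there exists $\eps>0$ and a constant $M<\infty$ such that
\[
\norm{\SS_\emu(\tbg)}\le M
\quad\text{and}\quad
|\uemu(\tbg)|\le M
\qquad \forall\,\tbg\in (\bbeta,\bgam)+\eps\B.
\]
Substituting these uniform bounds into the displayed estimate yields
\[
\norm{\bg}\;\le\; M+\sqrt{\tfrac{2(\rho+M)}{\eta-\bar\eta}}\ =:\nu
\]
for every $\bg$ in the sublevel set $\lset{\bg}{\LL_\emu(\bg,\tbg)\le\rho}$ and every $\tbg\in(\bbeta,\bgam)+\eps\B$, which is exactly the local uniform level boundedness claim.

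The only mildly delicate point, and what I expect to be the main obstacle, is justifying that the strong convexity inequality holds with the stated modulus on the full effective domain (in particular that $\phimu$ contributes nonnegatively to the Hessian on $\dom(\phimu)$, which is immediate since $\phimu$ is convex). Beyond that, the argument is a direct application of previously established results and does not require any new compactness argument.
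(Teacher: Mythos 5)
Your argument is correct, but it follows a different route than the paper's proof. Both rest on the strong convexity supplied by Lemma \ref{lem:LL weak cvx}; the difference is where the inequality is anchored. The paper argues by contradiction and anchors at an arbitrary fixed point $\hx\in\Rp\times\dom{\phi_\mu}$, using only the continuity of $\nabla_x\LL_\emu$ on its domain to get a bound $\norm{\nabla_x\LL_\emu(\hx,w^k)}\le c_0$ for $w^k\to\bw$, whence $\LL_\emu(\hx,w^k)-c_0\norm{x^k-\hx}+\frac{\eta-\blam}{2}\norm{x^k-\hx}^2\le\rho$ is impossible along a sequence with $\norm{x^k}\uparrow\infty$. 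You instead anchor at the minimizer, using the quadratic growth $\LL_\emu(\bg,\tbg)\ge\uemu(\tbg)+\frac{\eta-\blam}{2}\norm{\bg-\SS_\emu(\tbg)}^2$ together with Theorem \ref{thm:u cont} to bound $\norm{\SS_\emu(\tbg)}$ and $\uemu(\tbg)$ uniformly on a ball around $(\bbeta,\bgam)$. This is legitimate and non-circular, since Theorem \ref{thm:u cont} precedes the lemma and its proof does not use level boundedness; your route is direct rather than by contradiction and yields an explicit radius $\nu$. What the paper's version buys is a lighter dependency chain: it needs only Lemma \ref{lem:LL weak cvx} and the smoothness of $\LL_\emu$ on its domain, not the continuity of the value function and solution map. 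Two small points to tidy in your write-up: the growth inequality at the minimizer should be justified via the subgradient optimality condition so that it also covers $\mu=0$, where $\phimu=\del_{\Rqp}$ is nonsmooth (your closing remark essentially addresses this); and when $\rho<\uemu(\tbg)$ the sublevel set is empty, so the possibly negative quantity under the square root in your final bound is never actually an issue and should simply be excluded as the vacuous case.
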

\begin{proof}
Set $x=\bg,\ w=\tbg$, and $\bw=(\bbeta,\bgam)$.
If the result is false, there exists $\bw\in\Rp\times\Rq$, $\rho>0$, and a sequence
$\{(x^k,w^k)\}\subset(\Rp\times\dom(\phimu))\times(\Rp\times\Rq)$ 
such that
$w^k\rightarrow\bw$ and $\norm{x^k}\uparrow\infty$ with 
$x^k\in\lset{x}{\LL_\emu(x,w^k)\le\rho}$
for all $k\in\N$. 
By Lemma \ref{lem:LL weak cvx}, the mappings 
$x\mapsto\LL_\emu(x,w)$ are strongly convex with modulus 
$\hat\eta:=\eta-\blam >0$
for all $w\in\Rp\times\Rq$. Let $\hx\in\Rp\times\dom{\phi_\mu}$. 
Then $(\hx,\bw)$
is a point of continuity for $\nabla_x\LL_\emu$, so with no loss in 
generality there is a $c_0>0$ such that 
$\norm{\nabla_x\LL_\emu(\hx,w^k)}\le c_0$ for all $k\in\N$.
Then strong convexity implies that
\[\begin{aligned}
\LL_\emu(\hx,w^k)&-c_0\norm{x^k-\hx}+\frac{\hat\eta}{2}\norm{x^k-\hx}^2
\\ &\le
\LL_\emu(\hx,w^k)+\ip{\nabla_x\LL_\emu(\hx,w^k)}{x^k-\hx}
+\frac{\hat\eta}{2}\norm{x^k-\hx}^2
\\ &\le
\LL_\emu(x^k,w^k)\le\rho.
\end{aligned}\]
But $\LL_\emu(\hx,w^k)-c_0\norm{x^k-\hx}+\frac{\hat\eta}{2}\norm{x^k-\hx}^2\uparrow\infty$ since $\norm{x^k}\uparrow\infty$. This contradiction 
establishes the result.
\end{proof}

\begin{theorem}[Differentiability of $\uemu$]\label{thm:u diff}
Let $\mu\ge 0$, $\eta>\blam$ and suppose that the assumptions of Theorem \ref{thm:relaxed existence} hold.
Then
the function $\uemu$ defined in \eqref{eq:value_function_definition}
is continuously differentiable on $\Rp\times\Rq$ with 
\begin{equation}\label{eq:grad u}
\nabla \uemu\tbg\!=\!\nabla_{\tbg}\kappa_\eta(\tbeta\!-\!\hbeta,\tgam\!-\!\hgam)
\!=\!\eta
\begin{pmatrix}\tbeta\!-\!\hbeta\\ \tgam\!-\!\hgam\end{pmatrix}
\!,\,
\text{where}\ (\hbeta,\hgam)\!=\!\SS_\emu\tbg.
\end{equation}
\end{theorem}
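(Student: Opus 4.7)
The plan is to establish this differentiability statement by a direct envelope-type argument, exploiting that the only $\tbg$-dependence inside $\LL_\emu$ lives in the smooth quadratic coupling $\keta(\beta-\tbeta,\gam-\tgam)$, combined with the continuity of the argmin mapping $\SS_\emu$ already furnished by Theorem \ref{thm:u cont} and the local uniform level boundedness of Lemma \ref{lem:level bdd}. The candidate formula \eqref{eq:grad u} is dictated by differentiating only the coupling term, since $\LL(\bg)+\phimu(\gam)$ does not depend on $\tbg$.

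First I would fix $\tbg\in\Rp\times\Rq$ and a direction $d\in\Rp\times\Rq$. Writing $x=\bg$, $w=\tbg$, and $\hx=\SS_\emu(w)$, I would use the two elementary inequalities
\[
\uemu(w+td)\le \LL_\emu(\hx,w+td)
\qquad\text{and}\qquad
\uemu(w)\le \LL_\emu(\SS_\emu(w+td),w),
\]
each with equality at the respective minimizer. Subtracting yields the sandwich
\[
\LL_\emu(\SS_\emu(w+td),w+td)-\LL_\emu(\SS_\emu(w+td),w)
\le \uemu(w+td)-\uemu(w)
\le \LL_\emu(\hx,w+td)-\LL_\emu(\hx,w).
\]
Applying the mean value theorem in the $w$-variable to each of the outer expressions (which is permitted since $\LL_\emu$ is $C^1$ in $w$ everywhere, cf.\ Appendix \ref{appendix:derivatives_of_lmm}), dividing by $t$, and letting $t\downarrow 0$, I would invoke continuity of $\SS_\emu$ (Theorem \ref{thm:u cont}) together with continuity of $\nabla_w \LL_\emu$ to conclude that both bounds converge to $\ip{\nabla_w\LL_\emu(\hx,w)}{d}$. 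Hence the directional derivative of $\uemu$ at $w$ in direction $d$ exists and equals $\ip{\nabla_w \LL_\emu(\hx,w)}{d}$.

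Since this value is linear in $d$ and continuous in $w$ (again by continuity of $\SS_\emu$ combined with continuity of $\nabla_w\LL_\emu$), $\uemu$ is continuously differentiable with
\[
\nabla \uemu(w)=\nabla_w\LL_\emu(\SS_\emu(w),w).
\]
Finally I would compute $\nabla_w \LL_\emu$ explicitly. Because $\LL(\bg)+\phimu(\gam)$ is independent of $w=(\tbeta,\tgam)$, only the coupling $\keta(\beta-\tbeta,\gam-\tgam)=\tfrac{\eta}{2}\norm{(\beta-\tbeta,\gam-\tgam)}^2$ contributes, giving $\nabla_w \LL_\emu(\bg,\tbg)=\eta(\tbeta-\beta,\tgam-\gam)$. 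Evaluating at $(\hbeta,\hgam)=\SS_\emu\tbg$ delivers \eqref{eq:grad u}.

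The main obstacle I anticipate is justifying the passage to the limit in the sandwich inequality in a way that is insensitive to the implicit constraint $\gam\in\dom(\phimu)$ (open when $\mu>0$). This is precisely where Lemma \ref{lem:level bdd} and Theorem \ref{thm:u cont} are indispensable: local uniform level boundedness keeps $\SS_\emu(w+td)$ in a compact subset of the interior of $\dom(\phimu)$ for $t$ small, so $\nabla_w\LL_\emu$ is uniformly continuous along the relevant trajectories and the mean value expressions have well-defined, convergent limits. Everything else is a routine unpacking of the envelope argument.
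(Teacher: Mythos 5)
Your argument is correct, but it takes a different route from the paper. The paper proves this theorem by appealing to a general result on parametric minimization, \cite[Theorem 10.58]{rockafellar2009variational}: it verifies the hypotheses of that theorem using Lemma \ref{lem:level bdd} (level boundedness of $\LL_\emu(\cdot,w)$ locally uniformly in $w$) together with the continuity of $\LL_\emu$ and $\nabla\LL_\emu$ on $(\Rp\times\dom(\phimu))\times(\Rp\times\Rq)$, concluding that $\uemu$ is locally upper-$\CC^1$ and strictly differentiable with $\nabla\uemu(w)=\nabla_w\LL_\emu(\SS_\emu(w),w)$, and then evaluates the $w$-gradient of the coupling term exactly as you do. Your proof instead is a self-contained Danskin-type envelope argument: the two-sided sandwich from optimality of $\SS_\emu(w)$ and $\SS_\emu(w+td)$, the mean value theorem applied in the $w$-variable (legitimate, since for fixed $x$ the $w$-dependence is the finite quadratic $\keta$), continuity of $\SS_\emu$ from Theorem \ref{thm:u cont}, and the standard fact that a Gateaux derivative which is continuous in the base point yields continuous Fr\'echet differentiability. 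This buys elementarity: you avoid the heavy external theorem entirely, and in fact your closing worry is overcautious --- since $\nabla_w\LL_\emu(x,w)=\eta(w-x)$ is affine in $(x,w)$ and does not see $\phimu$ at all, you need neither Lemma \ref{lem:level bdd} nor any compactness or uniform-continuity argument to pass to the limit; joint continuity of an affine map plus continuity of $\SS_\emu$ suffices. What the paper's citation buys in exchange is the slightly stronger conclusion (strict differentiability, locally upper-$\CC^1$) obtained with no computation, and a template that would survive if the coupling were replaced by a less explicit smooth tether. Both proofs rest on the same two pillars already established in the paper: strong convexity of $\bg\mapsto\LL_\emu(\bg,\tbg)$ for $\eta>\blam$ (uniqueness of the minimizer) and continuity of $\SS_\emu$.
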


\begin{proof} 
We show that the result follows from \cite[Theorem 10.58]{rockafellar2009variational}.
Set $x=\bg$ and $w=\tbg$. The objective function in the definition of
$\uemu$ is $\LL_\emu(x,w)$, where $\LL_\emu$  is proper and lsc.
Moreover, Lemma \ref{lem:level bdd} tells us that
$\LL_\emu(x,w)$ is level bounded in $x$ locally uniformly in $w$
for all $w\in \Rp\times\Rq$. We have already observed that, 
for all $\mu\in\R_+$ and $\eta>\blam$, $\LL_{\emu}$ and  $\nabla \LL_{\emu}$ 
are continuous
on $(\Rp\times\dom(\phimu))\times(\Rp\times\Rq)$.
Therefore, by \cite[Theorem 10.58]{rockafellar2009variational}
and Theorem \ref{thm:u cont}, $\uemu$ is locally upper-$\CC^1$
and strictly differentiable at every point $w\in\Rp\times\Rq$ with
$\nabla \uemu(w)=\nabla _w\LL_{\emu}(\SS_\emu(w))$.
In addition, $\SS_\emu$ is continuous on 
$\Rp\times\Rq$.
The result follows since $\nabla _w\LL_{\emu}(\bg,\tbg)
=\eta\begin{pmatrix}\tbeta-\beta\\ \tgam-\gam\end{pmatrix}
$.
\end{proof}

\subsection{The Lipschitz Continuity of $\nabla \uemu\tbg$}

Since our goal is to employ the PGD algorithm to solve the relaxed problems
\eqref{eq:relax}, we 
require that $\nabla \uemu\tbg$ be Lipschitz continuous. 
Formula \eqref{eq:grad u} tells us that the Lipschitz continuity of
$\nabla \uemu\tbg$ is equivalent to that of the solution mapping 
$\SS_\emu$. 
To study the Lipschitz continuity of $\SS_\emu$ we make use of 
the mapping
$\map{G}{\Rp\times\Rq\times\Rp}{\Rp\times\Rq}$ 
be given by
\begin{equation}\label{eq:G}
G_\emu((\beta, \gam,v),(\tbeta,\tgam)):=
\begin{bmatrix}
\nabla_\beta \LL(\beta, \gam) + \eta(\beta - \tbeta) \\
\nabla_\gam \LL(\beta, \gam) + 
\eta(\gam - \tgam) - v
\\
v\odot \gamma  -\mu\one
\end{bmatrix}.
\end{equation}
Observe that, for $\mu>0$,  
$(\hbeta, \hgam)=\SS_\emu(\tbeta,\tgam)$ 
if and only if 
\begin{equation}\label{eq:kkt}
\hgam,\hv\in\R_+^q\ \text{ and }\ G_\emu((\hbeta, \hgam,\hv),(\tbeta,\tgam))=0,
\end{equation} 
since the equation $v\odot \gamma  =\mu\one$ implies that
$v=-\nabla \phimu(\gam)$. In addition, when $\mu=0$, condition \eqref{eq:kkt}
is equivalent to $(\hbeta, \hgam,v)$ being a KKT point for 
the optimization problem in
\eqref{eq:value_function_definition}
which, in turn, is equivalent to $(\hbeta, \hgam)=\SS_\emu(\tbeta,\tgam)$
by Theorem \ref{thm:u cont}. We record these observations in the following lemma.

\begin{lemma}\label{lem:oc for u}
Let the assumptions of Theorem \ref{thm:relaxed existence} hold.
Then, for every $(\mu,\eta)\in\R_+\times\R_{++}$, 
$(\hbeta, \hgam)=\SS_\emu(\tbeta,\tgam)$ if and only if there is a vector
$\hv\in\R^q_+$ such that $G_\emu((\hbeta, \hgam,\hv),(\tbeta,\tgam))=0$.
If $\mu>0$, then $\hv=-\nabla \phimu(\hgam)$, and if $\mu=0$, then
$\hv$ is the unique KKT multiplier associated with the constraint 
$0\le\gam$.
\end{lemma}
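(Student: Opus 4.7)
The plan is to read off the stated system as the first-order optimality conditions for the strongly convex subproblem defining $\SS_\emu$, splitting into the two regimes $\mu>0$ and $\mu=0$. The hypothesis $\eta>\bar\eta$ together with Lemma \ref{lem:LL weak cvx} guarantees that $\bg\mapsto\LL_\emu(\bg,\tbg)$ is strongly convex, and Theorem \ref{thm:u cont} already tells us that the unique minimizer $(\hbeta,\hgam)=\SS_\emu(\tbg)$ exists. So the whole task is to translate ``$(\hbeta,\hgam)$ is the unique minimizer'' into existence of a multiplier $\hv\in\R^q_+$ satisfying $G_\emu((\hbeta,\hgam,\hv),(\tbeta,\tgam))=0$, and conversely.

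First I would treat the case $\mu>0$. The effective domain of the objective in $\bg$ is $\Rp\times\R^q_{++}$, which is open, so any minimizer is interior and the optimality condition is the unconstrained Fermat rule $\nabla_{\bg}\LL_\emu(\hbeta,\hgam;\tbg)=0$. Using the formula $\nabla \phimu(\gam)_i=-\mu/\gam_i$ for $\gam\in\R^q_{++}$, this splits into the first two block equations of $G_\emu$ provided we \emph{define} $\hv:=-\nabla\phimu(\hgam)$, equivalently $\hv_i=\mu/\hgam_i$. Because $\hgam\in\R^q_{++}$, this $\hv$ lies in $\R^q_{++}\subset\R^q_+$ and satisfies $\hv\odot\hgam=\mu\one$, which is precisely the third block. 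Conversely, if $(\hbeta,\hgam,\hv)$ solves $G_\emu=0$ with $\hgam\in\R^q_+$, then the complementarity $\hv\odot\hgam=\mu\one$ with $\mu>0$ forces $\hgam\in\R^q_{++}$ and $\hv_i=\mu/\hgam_i=-\nabla\phimu(\hgam)_i$, so the first two equations give $\nabla_{\bg}\LL_\emu(\hbeta,\hgam;\tbg)=0$, which by strong convexity means $(\hbeta,\hgam)=\SS_\emu(\tbg)$.

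Next I would handle $\mu=0$. Now $\phimu=\del_{\R^q_+}$, so the defining problem for $\SS_\emu$ becomes
\[
\min_{\beta\in\Rp,\gam\in\R^q_+}\LL(\beta,\gam)+\kappa_\eta(\beta-\tbeta,\gam-\tgam),
\]
a strongly convex program with linear inequality constraints $\gam\ge 0$; Slater's condition is trivially satisfied by any $\gam>0$, so the KKT conditions are necessary and sufficient. These are exactly $\nabla_\beta\LL+\eta(\hbeta-\tbeta)=0$, $\nabla_\gam\LL+\eta(\hgam-\tgam)=\hv$, $\hv\in\R^q_+$, $\hgam\in\R^q_+$, and componentwise complementarity $\hv_i\hgam_i=0$, i.e.\ $\hv\odot\hgam=0=\mu\one$, which is $G_\emu((\hbeta,\hgam,\hv),(\tbeta,\tgam))=0$. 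Uniqueness of $\hv$ in this case is immediate: the $\gam$-block of $G_\emu=0$ reads $\hv=\nabla_\gam\LL(\hbeta,\hgam)+\eta(\hgam-\tgam)$, so once $(\hbeta,\hgam)$ is fixed (which it is, uniquely, by strong convexity), $\hv$ is determined.

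The argument is essentially bookkeeping around Fermat's rule and KKT, so there is no real obstacle; the only small subtlety is checking $\hv\in\R^q_+$ in the $\mu>0$ case (handled by noting $\hgam>0$ automatically from the complementarity $\hv\odot\hgam=\mu\one$) and making explicit that the $\mu=0$ KKT system is both necessary and sufficient (handled by strong convexity plus Slater). Both directions of the equivalence, and the two characterizations of $\hv$, then follow in parallel.
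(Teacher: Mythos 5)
Your proposal is correct and follows essentially the same route as the paper, which records this lemma as the observation that for $\mu>0$ the equation $v\odot\gamma=\mu\one$ identifies $v=-\nabla\phimu(\gam)$ so that $G_\emu=0$ is just Fermat's rule for the strongly convex interior problem, while for $\mu=0$ the system $G_\emu=0$ with $\hgam,\hv\in\R^q_+$ is exactly the (necessary and sufficient) KKT conditions, with $\hv$ pinned down by the $\gamma$-block equation. Your write-up merely makes these bookkeeping steps explicit; no gap.
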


Our approach to establishing the Lipschitz continuity of $\SS_\emu$ is to 
first show that $\SS_\emu$ is differentiable and then obtain a bound on its Jacobian.
As usual, diffentiability follows by applying the implicit function theorem to 
$G_\emu$.

\begin{lemma}[The invertibility of $\nabla_{(\beta, \gam,v)}G_\emu$]\label{lem:invert}
Let the assumptions of Theorem \ref{thm:relaxed existence} hold and let
$G_\emu$ be as given in \eqref{eq:G}.
Let 
$(\tbeta,\tgam)\in\Rp\times\Rq$ and  
$(\hbeta, \hgam,\hv)\in\Rp\times\R_+^q\times\R_+^q$ be such that
$G_\emu((\hbeta, \hgam,\hv),(\tbeta,\tgam))=0$. 
Then $\nabla_{(\beta, \gam,v)}G_\emu((\hbeta, \hgam,\hv),(\tbeta,\tgam))$
is invertible if and only if
\begin{equation}\label{eq:str cs}
0<\hv_i+\hgam_i,\ i=1,\dots,q,\qquad\mbox{(strict complementary slackness)}
\end{equation}
which automatically holds if $\mu>0$.
In this case, the inverse is given by
\begin{equation}\label{eq:grad G inv}
\begin{bmatrix}
H^{-1}\!\!-\!\begin{bmatrix}\hat R\\ \hat H_2\end{bmatrix}\!
(D(\hgam)\!+\!D(\hv)\hat H_2)^{-1}D(\hv)[\hat R^T\ \hat H_2]
& 
\!\begin{bmatrix}\hat R\\ \hat H_2\end{bmatrix}\!(D(\hgam)\!+\!D(\hv)\hat H_2)^{-1}
\\
\!-\,(D(\hgam)\!+\!D(\hv)\hat H_2)^{-1}D(\hv)[\hat R^T\ \hat H_2]
&\!
(D(\hgam)\!+\!D(\hv)\hat H_2)^{-1}
\end{bmatrix},
\end{equation}
where $D(\hgam):=\Diag{\hgam}$, $D(\hv):=\Diag{\hv}$, 
\[
\begin{aligned}
H&=
\begin{bmatrix}
H_1&R\\ R^T& H_2
\end{bmatrix}
:=\begin{bmatrix}
\nabla_{\beta\beta}\LL(\hbeta,\hgam)\!+\!\eta I&\nabla_{\gam\beta}\LL(\hbeta,\hgam)
\\
\nabla_{\beta\gam}\LL(\hbeta,\hgam)&\nabla_{\gam\gam}\LL(\hbeta,\hgam)
\!+\!\eta I
\end{bmatrix}\qquad\text{and}
\\
H^{-1}&=
\begin{bmatrix}
H_1^{-1}+H_1^{-1}R(H_2-R^TH_1^{-1}R)^{-1}R^TH_1^{-1}& -H_1^{-1}R(H_2-R^TH_1^{-1}R)^{-1}
\\
-(H_2-R^TH_1^{-1}R)^{-1}R^TH_1^{-1}&(H_2-R^TH_1^{-1}R)^{-1}
\end{bmatrix}
\\
&=:
\begin{bmatrix}
\hat H_1&\hat R\\ \hat R^T& \hat H_2
\end{bmatrix}.
\end{aligned}
\]
\end{lemma}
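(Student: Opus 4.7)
The plan is to expose the block structure of $J:=\nabla_{(\beta,\gam,v)}G_\emu((\hbeta,\hgam,\hv),(\tbeta,\tgam))$ and reduce the problem to two separate invertibility questions via a Schur complement argument. Differentiating $G_\emu$ componentwise gives
\[
J=\begin{bmatrix} H_1 & R & 0 \\ R^T & H_2 & -I \\ 0 & D(\hv) & D(\hgam)\end{bmatrix}
=\begin{bmatrix} H & B \\ C & D(\hgam)\end{bmatrix},
\qquad B:=\begin{bmatrix}0\\-I\end{bmatrix},\ C:=[\,0\ \ D(\hv)\,].
\]
By Lemma~\ref{lem:LL weak cvx}, since $\eta>\bar\eta$, the matrix in \eqref{eq:psd for LL} is positive definite, so the $(2\times 2)$-block $H$ is symmetric positive definite. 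In particular $H^{-1}$ exists, the Schur complement of $H_1$ in $H$ is $H_2-R^TH_1^{-1}R\succ 0$, and so the off-diagonal block $\hat H_2=(H_2-R^TH_1^{-1}R)^{-1}$ is also symmetric positive definite. Therefore, by the standard Schur-complement criterion, $J$ is invertible if and only if $S_J:=D(\hgam)-CH^{-1}B$ is invertible, and a direct computation with $H^{-1}B=[-\hat R^T,\,-\hat H_2^T]^T$ gives
\[
S_J=D(\hgam)+D(\hv)\hat H_2.
\]

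Next I would analyze invertibility of $S_J$. The forward direction is easy: if $\hv_i+\hgam_i=0$ for some $i$, then the $i$-th row of both $D(\hgam)$ and $D(\hv)\hat H_2$ vanishes, so $S_J$ is singular. For the reverse direction, split into two cases. If $\mu>0$, then from $\hv\odot\hgam=\mu\one$ both $D(\hgam)$ and $D(\hv)$ are strictly positive diagonal, so
\[
S_J=D(\hv)\bigl(D(\hv)^{-1}D(\hgam)+\hat H_2\bigr),
\]
and the bracketed matrix is the sum of a positive definite diagonal and a positive definite symmetric matrix, hence invertible. If $\mu=0$, partition the indices under the strict complementary slackness hypothesis \eqref{eq:str cs} into $I_+=\{i:\hgam_i>0\}$ (so $\hv_i=0$) and $I_0=\{i:\hv_i>0\}$ (so $\hgam_i=0$). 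After a coordinate permutation, $D(\hgam)=\mathrm{diag}(D_+,0)$ with $D_+\succ 0$, $D(\hv)=\mathrm{diag}(0,V_+)$ with $V_+\succ 0$, and writing $\hat H_2=\begin{bmatrix}A_{11}&A_{12}\\A_{12}^T&A_{22}\end{bmatrix}$ one sees
\[
S_J=\begin{bmatrix}D_+ & 0\\ V_+A_{12}^T & V_+A_{22}\end{bmatrix},
\]
which is block triangular. Invertibility then follows from invertibility of $D_+$ and of $V_+A_{22}$, the latter because $A_{22}$ is a principal submatrix of the positive definite matrix $\hat H_2$ and is therefore itself positive definite.

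Finally, to obtain the closed form \eqref{eq:grad G inv}, I would just insert the computed pieces into the standard $2\times 2$ block-inverse formula
\[
J^{-1}=\begin{bmatrix} H^{-1}+H^{-1}BS_J^{-1}CH^{-1} & -H^{-1}BS_J^{-1}\\ -S_J^{-1}CH^{-1} & S_J^{-1}\end{bmatrix},
\]
using $H^{-1}B=[-\hat R^T,-\hat H_2]^T$ and $CH^{-1}=D(\hv)[\hat R^T,\ \hat H_2]$, and reading off the expression for $H^{-1}$ in terms of $\hat H_1,\hat R,\hat H_2$ via the classical Schur-complement inverse formula (which is the content of the second displayed identity in the statement). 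I expect the main obstacle to be the $\mu=0$ case of the invertibility of $S_J$, since there one cannot just invoke positive definiteness of $D(\hgam)$ or factor out $D(\hv)$; the block-triangular reduction via the $I_+/I_0$ split under the strict complementary slackness hypothesis is the essential idea. The rest reduces to careful but routine block algebra.
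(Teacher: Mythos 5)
Your proposal is correct, and its skeleton is essentially the paper's: the same block form of the Jacobian, reduction through the Schur complement of the positive definite block $H$ to the matrix $S_J=D(\hgam)+D(\hv)\hat H_2$, the same zero-row argument showing that failure of \eqref{eq:str cs} makes $S_J$ (hence the Jacobian) singular, and the same block factorization to read off the inverse \eqref{eq:grad G inv}. The one place you genuinely diverge is the sufficiency direction: the paper never shows invertibility of $S_J$ directly, but instead takes $(r,s,t)$ in the nullspace of the full Jacobian, uses $D(\hv)s+D(\hgam)t=0$ together with complementarity and \eqref{eq:str cs} to control the cross term $s^Tt$, and then lets positive definiteness of $H$ force $(r,s)=(0,0)$ and hence $t=0$. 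You instead prove nonsingularity of $S_J$ itself: factoring out $D(\hv)$ when $\mu>0$, and, when $\mu=0$, using the strict-complementarity partition $I_+/I_0$ to exhibit $S_J$ as block lower triangular with invertible diagonal blocks $D_+$ and $V_+A_{22}$ (the latter because $A_{22}$ is a principal submatrix of the positive definite $\hat H_2$). Both arguments are valid; the paper's nullspace argument is a little shorter and avoids the case split and permutation, while yours makes explicit why the very matrix appearing in \eqref{eq:grad G inv} is invertible, which dovetails with the final block-inverse computation. One small imprecision to fix: Lemma \ref{lem:LL weak cvx} only gives positive \emph{semi}definiteness of the matrix in \eqref{eq:psd for LL} (with $\bar\eta$); positive definiteness of $H$ then follows because $\eta>\bar\eta$ adds a positive definite diagonal term, not because \eqref{eq:psd for LL} itself is positive definite.
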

\begin{proof}
Observe that
\begin{equation}\label{eq:nabla G}
\nabla_{(\beta, \gam,v)}G_\emu((\beta, \gam,v),(\tbeta,\tgam))
\!=\!\!
\begin{bmatrix}
\nabla_{\beta\beta}\LL(\beta,\gam)\!+\!\eta I&
\nabla_{\gam\beta}\LL(\beta,\gam)&0\\
\nabla_{\beta\gam}\LL(\beta,\gam)&\nabla_{\gam\gam}\LL(\beta,\gam)
\!+\!\eta I&-I
\\
0&\Diag{v}&\Diag{\gam}
\end{bmatrix}.
\end{equation}
Let us first assume that 
$\nabla_{(\beta, \gam,v)}G_\emu((\hbeta, \hgam,\hv),(\tbeta,\tgam))$
is invertible and, for simplicity write  
\[
\nabla_{(\beta, \gam,v)}G_\emu((\hbeta, \hgam,\hv),(\tbeta,\tgam))
=\begin{bmatrix}H&A\\ B^T&D\end{bmatrix},
\]
where $A:=[0,\ -I]^T$, $B=[0,\ \Diag{\hv}]^T$ and $D:=\Diag{\hgam}$. 
Since $H\in\bS^{p+q}_{++}$,
the matrix
\begin{equation}\label{eq:grad G inv1}
\begin{aligned}
\begin{bmatrix}I&0\\ -B^TH^{-1}&I\end{bmatrix}
\begin{bmatrix}H&A\\ B^T&D\end{bmatrix}
\begin{bmatrix}I&-H^{-1}A\\ 0&I\end{bmatrix}&=
\begin{bmatrix}H&0\\ 0&D-B^TH^{-1}A\end{bmatrix}
\\ &=\begin{bmatrix}H&0\\ 0&\Diag{\hgam}+\Diag{\hv}\hat H_{2}\end{bmatrix}
\end{aligned}
\end{equation}
is nonsingular. In particular, the matrix $\Diag{\hgam}+\Diag{\hv}\hat H_{2}$ is
necessarily invertible.
But if there is an $i$ such that $0=\hgam_i+\hv_i$, then $\hgam_i=\hv_i=0$
so that the matrix $\Diag{\hgam}+\Diag{\hv}H_{22}$ has a zero row and so
is singular. Since this cannot be that case,
\eqref{eq:str cs} must hold.

Conversely, suppose $(r^T,s^T,t^T)^T$ is in the nullspace of 
$\nabla_{(\beta, \gam,v)}G_\emu((\hbeta, \hgam,\hv),(\tbeta,\tgam))$. Then
$0=\Diag{\hv}s+\Diag{\hgam}t$. This combined with \eqref{eq:str cs} implies
that $s^Tt=0$. Consequently,
\[
0=\begin{pmatrix}r\\ s\end{pmatrix}^T\begin{pmatrix}0\\ t\end{pmatrix}
=\begin{pmatrix}r\\ s\end{pmatrix}^TH\begin{pmatrix}r\\ s\end{pmatrix},
\]
which implies that $(r,s)=(0,0)$ since $H$ is positive definite.
Therefore, $t=0$ which shows that 
$\nabla_{(\beta, \gam,v)}G_\emu((\hbeta, \hgam,\hv),(\tbeta,\tgam))$ is
nonsingular.

The formula for the inverse follows from \eqref{eq:grad G inv1} which tells us that
\begin{equation}\label{eq:grad G inv2}
\begin{bmatrix}H&A\\ B^T&D\end{bmatrix}^{-1}=
\begin{bmatrix}I&0\\ -B^TH^{-1}&I\end{bmatrix}
\begin{bmatrix}H^{-1}&0\\ 0&(\Diag{\hgam}+\Diag{\hv}\hat H_{2})^{-1}\end{bmatrix}
\begin{bmatrix}I&-H^{-1}A\\ 0&I\end{bmatrix}.
\end{equation}
Alternatively, one can apply the formulas in \cite{LS02}.
\end{proof}

Using Lemma \ref{lem:invert}, we
apply the implicit function theorem to the equation 
\[
G_\emu((\hbeta, \hgam,\hv),(\tbeta,\tgam))=0
\] 
and obtain the following result.

\begin{theorem}[Differentiability of $\SS_\emu$]\label{thm:diff sol map}
Let the hypotheses and notation of Lemma \ref{lem:invert} hold and let $\SS_\emu$
be as defined in \eqref{eq:argmin for u}. Given $\emu\in\R_+\times\R_+$,
define
$\map{\hSS_\emu}{\R^p\times\R^q}{\R^p \times\R^q_+\times\R^q_+}$ by
\begin{equation}\label{eq:hSS}
\hSS_\emu(\tbeta,\tgam)=
\lset{(\hbeta, \hgam,\hv)}{\hgam,\hv\in\R^q_+\ \text{and}\ 
0=G_\emu((\hbeta, \hgam,\hv),(\tbeta,\tgam))}
\end{equation}
Suppose
$(\tbeta,\tgam)\in\Rp\times\Rq$ and  
$(\bbeta, \bgam,\bv)=\hSS_\emu(\tbeta,\tgam)$
with $\bgam,\bv\in\R_+^q$ and such that \eqref{eq:str cs} holds. Then there
exist open neighborhoods $\widetilde\NN$ of $(\tbeta,\tgam)$ and 
such that 
$\SS_\emu$ and $\hSS_\emu$ are differentiable on $\widetilde\NN$ with
\[
\begin{aligned}
\nabla \hSS_\emu(\beta,\gam)&=\eta
\begin{bmatrix}
H^{-1}\!\!-\!\begin{bmatrix}\hat R\\ \hat H_2\end{bmatrix}\!
(D(\hgam)\!+\!D(\hv)\hat H_2)^{-1}D(\hv)[\hat R^T\ \hat H_2]
\\
\!-\,(D(\hgam)\!+\!D(\hv)\hat H_2)^{-1}D(\hv)[\hat R^T\ \hat H_2]
\end{bmatrix},
\\
\nabla \SS_\emu(\beta,\gam)&=\eta
\begin{bmatrix}
H^{-1}\!\!-\!\begin{bmatrix}\hat R\\ \hat H_2\end{bmatrix}\!
(D(\hgam)\!+\!D(\hv)\hat H_2)^{-1}D(\hv)[\hat R^T\ \hat H_2]
\end{bmatrix}
\end{aligned}
\]
for all $(\beta,\gam)\in \widetilde\NN$ and 
$(\hbeta,\hgam,\hv)=\hSS_\emu(\beta,\gam)$.
In particular, this implies that both $\hSS_\emu$ and $\SS_\emu$
are continuously differentiable on $\Rp\times\Rq$.
\end{theorem}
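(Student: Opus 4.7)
The plan is to derive the formulas and the differentiability from the classical implicit function theorem applied to the defining equation $G_\emu((\hbeta,\hgam,\hv),(\tbeta,\tgam))=0$ from \eqref{eq:G}, with the essential ingredient being the explicit inverse of $\nabla_{(\beta,\gam,v)}G_\emu$ furnished by Lemma \ref{lem:invert}. First I would fix $(\tbeta,\tgam)\in\Rp\times\Rq$ and set $(\bbeta,\bgam,\bv)=\hSS_\emu(\tbeta,\tgam)$; by hypothesis strict complementary slackness \eqref{eq:str cs} holds at this triple, so Lemma \ref{lem:invert} supplies invertibility of $\nabla_{(\beta,\gam,v)}G_\emu((\bbeta,\bgam,\bv),(\tbeta,\tgam))$ together with the explicit inverse \eqref{eq:grad G inv}. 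Since $\LL$ is $C^2$ on $\Rp\times\R^q_+$ (and on $\Rp\times\dom(\phimu)$ for $\mu>0$), the mapping $G_\emu$ is $C^1$ near the reference point, so the implicit function theorem produces neighborhoods $\widetilde\NN$ of $(\tbeta,\tgam)$ and $\NN$ of $(\bbeta,\bgam,\bv)$ and a $C^1$ branch $\map{\Psi}{\widetilde\NN}{\NN}$ with $G_\emu(\Psi(\beta,\gam),(\beta,\gam))=0$ and $\Psi(\tbeta,\tgam)=(\bbeta,\bgam,\bv)$.

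Next I would identify this IFT branch with $\hSS_\emu$. Theorem \ref{thm:u cont} gives single-valuedness and continuity of $\SS_\emu$, and Lemma \ref{lem:oc for u} shows that the multiplier component $\hv$ is uniquely determined by $(\hbeta,\hgam)$, so $\hSS_\emu$ is itself single-valued and continuous on $\Rp\times\Rq$. Shrinking $\widetilde\NN$ if necessary to enforce $\hSS_\emu(\widetilde\NN)\subset\NN$, the uniqueness clause of the IFT forces $\hSS_\emu=\Psi$ on $\widetilde\NN$. Consequently $\hSS_\emu$ is $C^1$ on $\widetilde\NN$, and $\SS_\emu$, being the first $p+q$ components, is $C^1$ as well.

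To extract the stated Jacobian formulas, the implicit function theorem gives
\[
\nabla\hSS_\emu(\beta,\gam)=-\bigl[\nabla_{(\beta,\gam,v)}G_\emu\bigr]^{-1}\nabla_{(\tbeta,\tgam)}G_\emu.
\]
A direct inspection of \eqref{eq:G} shows that $\nabla_{(\tbeta,\tgam)}G_\emu$ has block form whose first $p+q$ rows equal $-\eta I_{p+q}$ and whose last $q$ rows vanish, so multiplying \eqref{eq:grad G inv} on the right by $-\nabla_{(\tbeta,\tgam)}G_\emu$ simply extracts and scales by $\eta$ the first $p+q$ columns of \eqref{eq:grad G inv}. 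Reading off the top $p+q$ rows yields the formula for $\nabla\SS_\emu$, while the bottom $q$ rows yield the additional block in $\nabla\hSS_\emu$ corresponding to $\hv$.

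For the global claim, the local argument applies verbatim at any $(\tbeta,\tgam)$ where strict complementary slackness holds; Lemma \ref{lem:invert} already observed that this is automatic when $\mu>0$, so both $\hSS_\emu$ and $\SS_\emu$ are $C^1$ on all of $\Rp\times\Rq$. Continuity of the Jacobian on $\Rp\times\Rq$ then follows because every matrix entry in \eqref{eq:grad G inv} is a continuous function of $(\bbeta,\bgam,\bv)$ and, by the continuity of $\hSS_\emu$, of $(\tbeta,\tgam)$. The main obstacle I anticipate is the bookkeeping needed to verify that the IFT branch $\Psi$ actually coincides with $\hSS_\emu$ rather than a spurious alternate branch of $G_\emu=0$; this is exactly where Theorem \ref{thm:u cont} together with Lemma \ref{lem:oc for u} is essential, since it rules out any competing continuous selection in a neighborhood of $(\bbeta,\bgam,\bv)$.
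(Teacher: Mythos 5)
Your proposal is correct and follows essentially the same route as the paper, which proves this theorem precisely by applying the implicit function theorem to $G_\emu=0$ using the invertibility and explicit inverse from Lemma \ref{lem:invert}, with the Jacobian formula obtained from $-[\nabla_{(\beta,\gam,v)}G_\emu]^{-1}\nabla_{(\tbeta,\tgam)}G_\emu$ and the observation that $\nabla_{(\tbeta,\tgam)}G_\emu$ consists of $-\eta I_{p+q}$ atop a zero block. Your extra care in identifying the implicit-function-theorem branch with $\hSS_\emu$ via Theorem \ref{thm:u cont} and Lemma \ref{lem:oc for u} is a sound (and slightly more explicit) rendering of the argument the paper leaves implicit.
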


Using the notation of Lemma \ref{lem:invert} the expression for 
$\nabla \SS_\emu(\beta,\gam)$ in 
Theorem \ref{thm:diff sol map} can be simplified when $\mu>0$ to
{\small\[
\nabla \SS_\emu(\beta,\gam)\!=\eta
\!\left[ H^{-1}\!\!-\!
\begin{bmatrix}\!-\!H_1^{-1}R\\ I\end{bmatrix}
\hat H_2(\mu^{-1}\Diag{\hgam}^2\!+\!\hat H_2)^{-1}\!\hat H_2
[-R^TH_1^{-1}\ I]
\right]\! .
\]}
By combining this with the Shur complement formula
(e.g., see \eqref{eq:grad G inv1} and \eqref{eq:grad G inv2})
\[
H^{-1}=\begin{bmatrix}I&-H_1^{-1}R\\ 0&I\end{bmatrix}
\begin{bmatrix}H_1^{-1}&0\\ 0&(H_2-R^TH_1^{-1}R)^{-1}\end{bmatrix}
\begin{bmatrix}I&0\\ -R^TH_1^{-1}&I\end{bmatrix},
\]
where $\hat H_2=(H_2-R^TH_1^{-1}R)^{-1}$ is positive definite, we obtain  
{\small\[
\nabla \SS_\emu(\beta,\gam)\!=\!\eta
\begin{bmatrix}I&\!\!\!\!-H_1^{-1}R\\ 0&\!\!\!\!I\end{bmatrix}\!\!
\begin{bmatrix}H_1^{-1}&\!\!\!\!\!0\\ 
0&\!\!\!\!\!\hat H_2\!-\!\hat H_2(\mu^{-1}\Diag{\hgam}^2\!+\!\hat H_2)^{-1}\hat H_2
\end{bmatrix}\!\!
\begin{bmatrix}\!I&\!\!\!\!0\\ \!-R^TH_1^{-1}&\!\!\!\!I\end{bmatrix}
\]}
Since the matrix
\[
\hat H_2-\hat H_2(\mu^{-1}\Diag{\hgam}^2+\hat H_2)^{-1}\hat H_2=
\hat H_2^{1/2}[I-(I+\mu^{-1}\hat H_2^{-1/2}\Diag{\hgam}^2\hat H_2^{-1/2})^{-1}]
\hat H_2^{1/2}
\]
is positive definite, we have that
\begin{equation}\label{eq:grad S bound}
\norm{\nabla \SS_\emu(\beta,\gam)}\le
\eta(1+\norm{H_1^{-1}R}^2)
\max\{\norm{H_1^{-1}},\norm{(H_2-R^TH_1^{-1}R)^{-1}}\}.
\end{equation}
Since $H_1=\nabla_{\beta\beta}\LL(\hbeta,\hgam)+\eta I$, we have 
\(
\norm{H_1^{-1}}\le \eta^{-1}<(\eta-\bar\eta)^{-1}.
\)
We now show that  $(\eta-\bar\eta)^{-1}$ bounds
$\norm{(H_2-R^TH_1^{-1}R)^{-1}}$.
For this it is sufficient to show that $(\eta-\bar\eta)\le \mu_\mmin(H_2-R^TH_1^{-1}R)$.
By Lemma \ref{lem:LL weak cvx}, the matrix in \eqref{eq:psd for LL} is positive
semidefinite.
Since 
$\nabla_{\beta\beta}\LL(\beta,\gam)$ 
is positive
definite, the Shur complement 
\(
\nabla_{\gam\gam}\LL(\beta,\gam)+\blam I-
\nabla_{\beta\gam}\LL(\beta,\gam)\nabla_{\beta\beta}
\LL(\beta,\gam)^{-1}
\nabla_{\gam\beta}\LL(\beta,\gam)
\)
is positive semidefinite. Consequently,
\[
\begin{aligned}
H_2-R^TH_1^{-1}R=
(\eta-\bar\eta) I&+ (\nabla_{\gam\gam}\LL(\beta,\gam)+\blam I-
R^T\nabla_{\beta\beta}\LL(\beta,\gam)^{-1}R)
\\ &+R^T(\nabla_{\beta\beta}\LL(\beta,\gam)^{-1}-(\nabla_{\beta\beta}\LL(\beta,\gam)+\eta I)^{-1})R
\ \succeq (\eta-\bar\eta) I,
\end{aligned}
\]
since $\nabla_{\beta\beta}\LL(\beta,\gam)^{-1}-(\nabla_{\beta\beta}\LL(\beta,\gam)+\eta I)^{-1}$ is positive definite.
Therefore, $(\eta-\bar \eta)\le \mu_\mmin(H_2-R^TH_1^{-1}R)$.
By combining this with \eqref{eq:grad S bound} we obtain the bound
\begin{equation}\label{eq:S lip bd 1}
\norm{\nabla \SS_\emu(\beta,\gam)}\le
\frac{\eta}{\eta-\blam}\left(1+\norm{H_1^{-1}R}^2\right),
\end{equation}
where 
\[
\begin{aligned}
H_1^{-1}R&=-(X^T\Omega(\hgam)^{-1}X+\eta I)^{-1}
\sum_{i=1}^mX_i^T\Omega_i(\hgam)^{-1}Z_i
\Diag{Z_i^T\Omega_i(\hgam)^{-1}(X_i\hbeta-Y_i)}
\\ &=
-(X^T\Omega(\hgam)^{-1}X+\eta I)^{-1}X^T\Omega(\hgam)^{-1}
\hZ\,\Diag{\hZ^T\Omega(\hgam)^{-1}r(\hbeta)},
\end{aligned}
\]
with $r(\hbeta):=X\beta-y$ and 
\(
\hZ=\Diag{Z_1,Z_2,\dots,Z_m}.
\)
Therefore, as in Lemma \ref{lem:LL weak cvx}, we obtain the bound
\begin{equation}\label{eq:S lip bd 2}
\norm{H_1^{-1}R}\le \eta^{-1}\mu^{-2}_\mmin(\Lam)          
\sig_\mmax(X)\sig_\mmax^2(Z)\norm{X\hbeta -y}.
\end{equation}
This inequality can be used to show that $\nabla \uemu$ is bounded on the lower
level sets of $\uemu\tbg + R\tbg+\del_{\Rqp}(\tgam)$ if
$\LL_{\emu}(\bg,\tbg) + R\tbg+\del_{\R^q_+}(\tgam)$ is level
compact. However, we only know that this is true if we can bound the values 
of $\hgam$ over these sets. In practice, the values of $\hgam$ are bounded if the
model is well posed since these values are tied to the variances of
the random effects. One can accommodate this by adding a constraint of the 
form $\gam\le\gammax$ for $\gammax\in\R^q_{++}$ chosen sufficiently large.

\begin{lemma}[Lipschitz Continuity of $\nabla\uemu$]\label{lem:grad u lip}
Let the assumptions of Theorem \ref{thm:relaxed existence} hold and suppose
$\mu>0$ and $\gammax\in\R^q_{++}$. Let $\zeta\in\R$ and set 
\[
\begin{aligned}
\widehat\VV(\eta,\mu,\gammax,\zeta)&:=
\lset{(\bg,\tbg)}{\begin{array}{c}
\LL_{\emu}(\bg,\tbg) + R\tbg+\del_{\R^q_+}(\tgam)\le\zeta,\\ \gam,\ \tgam\le\gammax\end{array}},
\text{and}
\\
\VV(\eta,\mu,\gammax,\zeta)&:=\lset{\tbg}{\uemu\tbg + R\tbg+\del_{\Rqp}(\tgam)\le\zeta,\ \tgam\le\gammax}.
\end{aligned}
\]
Then 
\begin{enumerate}
\item
both $\widehat\VV(\eta,\mu,\gammax,\zeta)$ and 
$\VV(\eta,\mu,\gammax,\zeta)$ 
are compact with
\begin{equation}\label{eq:VV inclusion}
\VV(\eta,\mu,\gammax,\zeta)\subset\lset{\tbg}{\begin{aligned}\tgam\le\gammax
 \ \text{and}\ 
\exists\,\bg\in\Rp\times\R^q_{++}
\ s.t.
\\
(\bg,\tbg)\in\widehat\VV(\eta,\mu,\gammax,\zeta)
\end{aligned}},
\end{equation}
\item
the set $\VV(\eta,\mu,\gammax,\zeta)$ has nonempty interior if
$\zeta>\uemu\tbg$ for some $\tbg\in\Rp\times\Rq$, and
\item
the set
$\widetilde\VV(\eta,\mu,\gammax,\zeta,\omega):=\clco(\VV(\eta,\mu,\gammax,\zeta)+\omega\uball)$ is a compact, convex set with nonempty interior
whenever $\VV(\eta,\mu,\gammax,\zeta)\ne\emptyset$.
\end{enumerate}
Moreover, $\nabla u_\emu$ is Lipschitz on $\clco(\VV(\eta,\mu,\gammax,\zeta)+\omega\uball)$
for every $\omega\ge 0$, where 
\[
\uball:=\lset{\bg\in\Rp\times\R^q}{\norm{\bg}\le 1}.
\]
\end{lemma}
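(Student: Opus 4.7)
The plan is in four stages, handling the set-theoretic claims first and then the Lipschitz estimate.

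For item (1), to show $\widehat\VV(\eta,\mu,\gammax,\zeta)$ is compact I would argue closedness by observing that $\LL_\emu$, $R$, and $\delta_{\R^q_+}$ are all lower semicontinuous and the box $\gam,\tgam\le\gammax$ is closed. For boundedness I would exploit the sublevel inequality: on the box $\tgam\in[0,\gammax]$ all of $\LL$, $R$, $\kappa_\eta$, $\delta_{\R^q_+}$ are bounded below (using \eqref{eq:lower bd}), hence $\phi_\mu(\gam)$ is bounded above; since $\phi_\mu$ is a barrier this forces $\gam$ away from $0$, and together with $\gam\le\gammax$ bounds $\gam$. The $1$-coercivity of $R$ with $\tgam$ bounded then forces $\tbeta$ to be bounded, and the coupling $\kappa_\eta(\beta-\tbeta,\gam-\tgam)$ together with the overall bound forces $\beta-\tbeta$, hence $\beta$, to be bounded.

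For compactness of $\VV$ and the inclusion, given $\tbg\in\VV$ I would take $\bg=\SS_\emu(\tbg)$ so that $\uemu(\tbg)=\LL_\emu(\SS_\emu(\tbg),\tbg)$; the sublevel inequality for $\VV$ then transfers to $\widehat\VV$, and continuity of $\SS_\emu$ on the bounded set $\VV$ (Theorem \ref{thm:u cont}) keeps the $\gam$-component in a compact subset of $\R^q_{++}$, which can be absorbed into the $\gammax$ box. Items (2) and (3) are routine: nonempty interior of $\VV$ under the hypothesis follows from continuity of $\uemu+R+\delta_{\Rqp}$ near any strict point, and $\widetilde\VV=\clco(\VV+\omega\uball)$ is then compact, convex, and has nonempty interior for every $\omega\ge0$ by standard convex analysis.

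The heart of the proof is the Lipschitz bound on $\nabla\uemu$. From the identity \eqref{eq:grad u}, $\nabla\uemu(\tbg)=\eta(\tbg-\SS_\emu(\tbg))$, and Theorem \ref{thm:diff sol map} provides continuous differentiability of $\SS_\emu$. I would use the closed-form estimate \eqref{eq:S lip bd 1} together with the residual bound \eqref{eq:S lip bd 2}, which show that $\|\nabla\SS_\emu(\tbg)\|$ is controlled by data quantities that depend continuously on $(\hbeta,\hgam)=\SS_\emu(\tbg)$ provided $\hgam$ stays in a compact subset of $\R^q_{++}$. Since $\widetilde\VV$ is compact and $\SS_\emu$ is continuous, the image $\SS_\emu(\widetilde\VV)$ is compact, and because $\mu>0$ Lemma \ref{lem:oc for u} guarantees it lies in $\Rp\times\R^q_{++}$. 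This yields a uniform bound on $\|\nabla\SS_\emu\|$ over $\widetilde\VV$, and a mean-value argument on the convex set $\widetilde\VV$ upgrades this to a global Lipschitz constant for $\SS_\emu$, hence for $\nabla\uemu$.

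The main obstacle I anticipate is obtaining a uniform lower bound on the components of $\hgam$ as $\tbg$ ranges over the convexified set $\widetilde\VV$ (not just the original sublevel set $\VV$), so that the matrix $H_1$ and the denominators in the Jacobian formula stay uniformly well-conditioned. This is not visible in the closed-form bounds themselves but does follow from compactness of $\widetilde\VV$ together with continuity of $\SS_\emu$ into $\Rp\times\R^q_{++}$, which in turn relies on $\mu>0$ keeping the iterates strictly inside the positive orthant. Once that uniform separation from the boundary is in hand, everything else is assembly of constants from \eqref{eq:S lip bd 1} and \eqref{eq:S lip bd 2}.
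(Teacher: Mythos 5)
Your proposal is correct and follows essentially the same route as the paper: compactness of the level sets from the $\gammax$ box together with lower boundedness of $\LL$, $\phimu$, $\kappa_\eta$ and the $1$-coercivity of $R$, the inclusion \eqref{eq:VV inclusion} witnessed by $\SS_\emu(\tbg)$, and the Lipschitz property of $\nabla\uemu$ obtained from \eqref{eq:grad u}, Theorem \ref{thm:diff sol map}, and the Jacobian bounds \eqref{eq:S lip bd 1}--\eqref{eq:S lip bd 2} combined with continuity of $\SS_\emu$ and the compactness and convexity of $\widetilde\VV$. The only difference is cosmetic: the uniform lower bound on the components of $\hgam$ that you flag as the main obstacle is not actually needed, since the bounds \eqref{eq:S lip bd 1}--\eqref{eq:S lip bd 2} do not degenerate as $\hgam$ approaches the boundary (only $\norm{X\hbeta-y}$ enters, and it is controlled by compactness of $\SS_\emu(\widetilde\VV)$), although your resolution via $\mu>0$ and continuity of $\SS_\emu$ into $\Rp\times\R^q_{++}$ is also sound.
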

\begin{proof}
Since Theorem \ref{thm:basic existence} tells us that
$\LL$ is bounded below, 
$\LL_{\emu}(\bg,\tbg) + R\tbg+\del_{\R^q_+}(\tgam)$ is not level compact
if and only if there is an unbounded sequence
in a lower level set of 
$\LL_{\emu}(\bg,\tbg) + R\tbg+\del_{\R^q_+}(\tgam)$ 
for which $\gam^k\uparrow\infty$.
Therefore, the compactness of $\widehat\VV(\eta,\mu,\gammax,\zeta)$ 
follows from the lower semicontinuity of  
$\LL_{\emu}(\bg,\tbg) + R\tbg+\del_{\R^q_+}(\tgam)$.
Since
\[
\uemu\tbg + R\tbg+\del_{\Rqp}(\tgam)\le 
\LL_{\emu}(\bg,\tbg) + R\tbg+\del_{\R^q_+}(\tgam)\quad
\forall\, \bg\in\Rp\times\Rq,
\]
the inclusion \eqref{eq:VV inclusion} holds.
In addition, the set on the right hand side of \eqref{eq:VV inclusion}
is the projection of $\widehat\VV(\eta,\mu,\gammax,\zeta)$ onto
its
first components $\bg$ and so is compact. This in turns tells us that 
$\VV(\eta,\mu,\gammax,\zeta)$ is compact. Hence, 
$\clco(\VV(\eta,\mu,\gammax,\zeta)+\omega\uball)$
is also compact. The continuity of $\uemu$ implies that 
$\VV(\eta,\mu,\gammax,\zeta)$ has nonempty interior if
$\zeta>\uemu\tbg$ for some $\tbg\in\Rp\times\Rq$.
Theorem \ref{thm:u cont} shows that $\SS_\emu$ is continuous on 
$\Rp\times\Rq$ so the bound \eqref{eq:S lip bd 2} combined with 
Theorem \ref{thm:diff sol map} implies that $\SS_\emu$ is locally
Lipschitz on $\Rp\times\Rq$. Hence, by \eqref{eq:grad u},
$\nabla \uemu$ is locally Lipschitz on $\Rp\times\Rq$.
The compactness of 
$\clco(\VV(\eta,\mu,\gammax,\zeta)+\omega\uball)$
tells us that $\nabla \uemu$ is Lipschitz on
this set for all $\omega\ge 0$.
\end{proof}


\section{Convergence of the PGD Algorithm for $\Phi_\emu$}
\label{sec:convergence pgd}

The convergence of the PGD algorithm for fixed valued of the relaxation parameters
$\eta$ and $\mu$ appeals to the standard convergence theory as presented in
\cite[Chapter 10]{AB17} which requires the use of Assumptions (A)--(C) in Section 
\ref{sec:pgd}. We assume that the
variable selection regularizer $R$ is chosen so that
Assumption (A) holds. 
In addition, under the assumptions of Theorem 5, 
Theorem \ref{thm:u diff} tells us that the function $\uemu$ is well defined and continuously differentiable on all of $\Rp\times\Rq$
with the solution mapping 
$\SS_\emu\tbg$ well defined, single valued, and differentiable on $\Rp\times\Rq$ 
(Theorem \ref{thm:diff sol map}).
 Therefore, Assumption (C)
is satisfied as is much of assumption (B).
However, as is commonly the case in a specific application, the 
$L_\emu$-smoothness of $\uemu$ over 
$\intr{\dom{\uemu}}=\Rp\times\Rq$ fails. This drawback is remedied by
observing that the 
PGD algorithm is a descent algorithm. This allows us to focus on the behavior of
the functions over the lower level sets
described in Lemma \ref{lem:grad u lip}.

Let $\bw^{0}=(\tbeta^0,\tgam^0)\in \Rp\times\R^q_+$ be the point at which Algorithm \ref{alg:MSR3}
is initiated and let
$\zeta>\LL_{\emu}(\bg,(\tbeta^0,\tgam^0)) 
+ R(\tbeta^0,\tgam^0)+\del_{\R^q_+}(\tgam^0)$ for any 
$\bg \in \Rp\times\R^q_{++}$. 
For $\omega\ge 0$ and $\eps\ge 0$, define
\(
\Dfr(\omega,\eps):=\widetilde
\VV(\eta,\mu,\gammax+\eps\one,\zeta+\eps,\omega+\eps)
\)
and set
\begin{equation}\label{eq:hats}
\hat u_\emu:=\uemu+\delta_{\Dfr(\bomega,\beps)}\quad\text{ and }\quad
\hR:=R+\del_{\Dfr(\bomega,0)},
\end{equation}
where $\beps>0$, $\widetilde\VV$ is defined in Lemma \ref{lem:grad u lip} and
\[
\bomega:=1+t_0\,\max\lset{\nabla \uemu\tbg}{\tbg\in\VV(\eta,\mu,\gammax,\zeta)}.
\]
Observe that all iterates of Algorithm \ref{alg:MSR3} 
lie in the set $\VV(\eta,\mu,\gammax,\zeta)$
since it is a descent algorithm.
Moreover, since the prox operator is nonexpansive 
(e.g., see  
\cite[Theorem 6.42(a)]{AB17} or \cite[Theorem 12.19]{rockafellar2009variational}),
all of the points tested in the backtracking line search in Algorithm 
\ref{alg:MSR3} 
must also
lie in the set $\widetilde\VV(\eta,\mu,\gammax,\zeta,\bomega)$ by construction.
Therefore, the iterates of Algorithm \ref{alg:MSR3} 
are identical to those obtained
when the algorithm is applied to $\hat u_\emu$ with 
$\tR:=R+\del_{\Dfr(\bomega,0)}$.
That is, we can assume
that the Algorithm \ref{alg:MSR3} 
is being applied to $\hat u_\emu$. 
Observe that $\hat u_\emu$ is closed and proper,
$\dom{\hat u_\emu}=\Dfr(\bomega,\beps)$ is convex, and
 $\dom{\hat u_\emu}=\Dfr(\bomega,\beps)$
has nonempty interior (by Lemma \ref{lem:grad u lip}(3)) with $\dom{\tR}\subset \intr{\dom{\hat u_\emu}}$ since $\beps>0$. 
In addition, the final statement of Lemma \ref{lem:grad u lip}
tells us that there is an $L_{(\emu,\gammax,\zeta)}>0$ 
such that $\hat u_\emu$ is 
$L_{(\emu,\gammax)}$-smooth over $\intr{\dom{\hat u_\emu}}$. Hence, 
Assumptions (A)-(C) are satisfied by $\hat u_\emu$ and $\tR$ and so the 
convergence properties in \cite[Theorem 10.15]{AB17} hold for Algorithms
1 and 2 applied to $\uemu$ and $R$ under the assumptions of Theorem  
\ref{thm:relaxed existence}. By applying these observation
to \cite[Theorem 10.15]{AB17}, we obtain the following convergence result.

\begin{theorem}[Convergence of Algorithms \ref{alg:MSR3} and \ref{alg:pgd with bt}]\label{thm:convergence}
Let the assumptions of Theorem \ref{thm:relaxed existence} hold, and
let $\Phi_\emu$ be as defined in \eqref{eq:varphi}. 
Let $\{\tbgk\}$ be a sequence generated
by either Algorithm \ref{alg:MSR3} or \ref{alg:pgd with bt} 
with parameters 
$\theta\in(0,1),\ \tau\in(0,1)$,
$\eta>0,\ \mu>0$, $\eps_{\mbox{\tiny Tol}}=0$,
$t_0>0$, and $\gammax>\tgam^0$. 
Then, given $\zeta>\uemu(\tbeta^0,\tgam^0)$ there is an 
$L_{(\emu,\gammax,\zeta)}>0$ such that $\nabla\uemu$ is 
$L_{(\emu,\gammax,\zeta)}$-smooth over 
$\widetilde\VV((\emu,\gammax,\zeta,1)$. 
In Algorithm \ref{alg:pgd with bt}, replace 
$L_\emu$ with $L_{(\emu,\gammax,\zeta)}$ and set
 \[
 \begin{aligned}
 M&:=\begin{cases}
 \alf(1-\alf\frac{L_{(\emu,\gammax,\zeta)}}{2}),&
 \mbox{in Algorithm \ref{alg:pgd with bt}},
 \\
 \frac{2t_0\theta^2(1-\tau)}{\max\{2\theta(1-\tau),t_0L_{(\emu,\gammax,\zeta)}\}},&
 \mbox{in Algorithm \ref{alg:MSR3}},
 \end{cases}
 \qquad\text{and}
 \\
 r&:=
 \begin{cases}
 \alf,&
 \mbox{in Algorithm \ref{alg:pgd with bt}},
 \\
t_0,&
 \mbox{in Algorithm \ref{alg:MSR3}}.
 \end{cases}
 \end{aligned}\]
Then either $\tgam^k>\gammax$ after a finite number of iterations and the algorithms terminate, or the 
following hold:
\begin{enumerate}
\item
The sequence $\Phi_\emu\tbgk$ is nondecreasing.
In addition, \[\Phi_\emu(\tbgk)\tbeta^{k+1},\tgam^{k+1})<\Phi_\emu\tbgk\]
if and only if $\tbgk$ is not a stationary point of \eqref{eq:relax3}.
\item
$\norm{\tbgk-\prox_{r\tR}(\tbgk-r\nabla u_\emu\tbgk)}\rightarrow 0$ with
\[
\min_{i=0,1,\dots,k}\norm{\tbgk-\prox_{r\tR}(\tbgk-r\nabla u_\emu\tbgk)}
\le \frac{\sqrt{\Phi_\emu(\tbeta^0,\tgam^0)-\Phi_\emu^\mathsf{OPT}}}
{\sqrt{M(k+1)}},
\]
where $\Phi_\emu^\mathsf{OPT}:=\inf \Phi_\emu$.
\item
All limit points of the sequence $\{\tbgk\}$ are stationary points of
problem \eqref{eq:relax3}.
\end{enumerate}
\end{theorem}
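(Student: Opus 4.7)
The plan is to reduce the statement to a direct invocation of \cite[Theorem 10.15]{AB17} by replacing $\uemu$ and $\tR$ with the localized surrogates $\hat u_\emu$ and $\hR$ defined in \eqref{eq:hats}, so that Assumptions (A)--(C) of Section~\ref{sec:pgd} are all verified globally and the appeal to the textbook result becomes mechanical. The critical device is that both algorithms are descent methods, so one only ever needs smoothness of $\nabla\uemu$ on a compact lower level set, which is exactly where Lemma~\ref{lem:grad u lip} delivers a finite Lipschitz constant $L_{(\emu,\gammax,\zeta)}$.

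First I would establish the invariance claims. Choose $\zeta>\uemu(\tbeta^0,\tgam^0)$ as in the statement and take $\bomega$ as in the paragraph preceding the theorem. Since $\Phi_\emu$ decreases along the accepted iterates (by the line search condition in Algorithm~\ref{alg:pgd with bt}, and by the choice $\alpha<2/L$ together with the standard descent lemma in Algorithm~\ref{alg:MSR3}), every $\tbgk$ lies in $\VV(\eta,\mu,\gammax,\zeta)$. Next, using the nonexpansiveness of $\prox_{t\tR}$ (\cite[Theorem 6.42(a)]{AB17}) together with the uniform bound $t_0\max\norm{\nabla\uemu}$ incorporated into $\bomega$, the trial points $\prox_{t\tR}(\tbgk-t\nabla\uemu(\tbgk))$ tested inside the backtracking loop remain in $\widetilde\VV(\eta,\mu,\gammax,\zeta,\bomega)$ for every $t\in(0,t_0]$. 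This is the step where the construction of $\bomega$ pays off: it ensures the line search never leaves the compact set on which we control the Lipschitz constant.

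Next I would verify Assumptions (A)--(C) for the pair $(\hat u_\emu,\hR)$. Assumption (A) is inherited from $R$ (assumed lsc convex) and the closed convex set $\Dfr(\bomega,0)$. For (B), the function $\hat u_\emu$ is closed and proper, its domain $\Dfr(\bomega,\beps)$ is convex and has nonempty interior by Lemma~\ref{lem:grad u lip}(3), and $\dom\hR=\Dfr(\bomega,0)\subset\intr{\dom\hat u_\emu}$ because $\beps>0$. The $L$-smoothness of $\hat u_\emu$ on $\intr{\dom\hat u_\emu}$ with constant $L_{(\emu,\gammax,\zeta)}$ is precisely the last conclusion of Lemma~\ref{lem:grad u lip} applied to $\widetilde\VV(\eta,\mu,\gammax,\zeta,\bomega+\beps)$. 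For (C), existence of an optimal solution of the surrogate problem follows from the compactness of $\Dfr(\bomega,\beps)$ and lower semicontinuity of $\hat u_\emu+\hR$; equivalence with the original stationarity notion holds because the surrogate differs from $\Phi_\emu$ only via an indicator that is inactive near all iterates.

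With (A)--(C) in place, I would invoke \cite[Theorem 10.15]{AB17} verbatim. The monotone decrease in statement~(1) and its strict version away from stationarity come directly from that theorem, once we note that stationary points of $\hat u_\emu+\hR$ inside $\intr{\dom\hat u_\emu}$ coincide with stationary points of \eqref{eq:relax3}. The $O(1/\sqrt{k+1})$ residual bound in (2), together with the explicit constants $M$ and $r$ for the constant-stepsize and backtracking variants, is the quantitative content of \cite[Theorem 10.15]{AB17}; the prefactor $\Phi_\emu(\tbeta^0,\tgam^0)-\Phi_\emu^{\mathsf{OPT}}$ comes from telescoping the descent inequality. Finally, (3) follows because the prox-gradient residual is continuous in $\tbg$ on $\intr{\dom\hat u_\emu}$ and vanishes in the limit. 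The main obstacle throughout is not any of the individual verifications but the conceptual reduction from global to localized smoothness of $\nabla\uemu$, since $\uemu$ genuinely fails to be globally Lipschitz smooth; everything else is bookkeeping once the surrogate $\hat u_\emu$ is constructed and the iterates are shown to be trapped in its effective domain.
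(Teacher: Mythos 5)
Your proposal is correct and follows essentially the same route as the paper: the paper's proof consists precisely of the localization argument you describe (trapping the iterates in the compact level set $\VV(\eta,\mu,\gammax,\zeta)$ via the descent property, trapping the line-search trial points in $\widetilde\VV(\eta,\mu,\gammax,\zeta,\bomega)$ via nonexpansiveness of the prox operator, and verifying Assumptions (A)--(C) for the surrogate pair $(\hat u_\emu,\hR)$ using Lemma~\ref{lem:grad u lip}), followed by a direct appeal to \cite[Theorem 10.15]{AB17}. The only cosmetic difference is that the paper carries out this setup in the paragraphs preceding the theorem and leaves the formal proof as a two-line referral, whereas you fold it all into the proof body.
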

\begin{proof}
As observed prior to the statement of the theorem, 
both Algorithm \ref{alg:MSR3} and \ref{alg:pgd with bt}
behave as if they were applied to the the functions 
$\hat u_\emu$ and $\hR$ defined in \eqref{eq:hats}. 
It was also shown that the functions $\hat u_\emu$ and $\hR$
satisfy the Assumptions (A)-(C) required by \cite[Theorem 10.15]{AB17}.
Hence, the consequences of \cite[Theorem 10.15]{AB17} hold.
By translating the notion of \cite[Theorem 10.15]{AB17} to that of this paper,
we obtain the result.
\end{proof}

\section{A Hybrid Algorithms for Feature Selection in Mixed Effects Models}
\label{sec:hybrid alg}

 In the previous section we established the convergence properties of the 
 PGD algorithm applied to the function $\Phi_\emu$ for fixed values of
 $\eta$ and $\mu$. In subsection \ref{ssec:existence and consistency},
 two consistency results are established for the relaxed problem \eqref{eq:relax3}.
 Theorem \ref{thm:eta consistency} shows that, for fixed $\mu\ge 0$,
 every limit point of solutions to \eqref{eq:relax3} as $\eta\uparrow\infty$
 is a solution to \eqref{eq:log-barrier problem}, while Theorem
 \ref{thm:mu consistency} tells us that every
 limit point of solutions \eqref{eq:log-barrier problem} as $\mu\downarrow 0$
 is a solution to the variable selection problem \eqref{eq:vs_lme}. 
These results suggest a range of numerical approachs to obtaining approximate 
solutions to the target problem \eqref{eq:vs_lme}.
 The issue of foremost concern is the method for approximating 
 solutions to \eqref{eq:value_function_definition} since the accuracy in this approximation
 determines the accuracy in both $\uemu$ and $\nabla \uemu$.
 To address this concern, we view the algorithm from an interior point
 perspective where every point on the {\it central path} is a solution
 to the optimization problem \eqref{eq:value_function_definition}
 defining $\uemu$ for the associated value of the homotopy parameter $\mu$. 
 An approximate solution is then
 considered acceptable if it is sufficiently close to the central path where
 proximity to the central path is measured in terms of the notion of the 
 {\it neighborhood} of the central path , e.g. see \cite{Wright-IP-book}. 
 Due to the convexity of the optimization problems \eqref{eq:value_function_definition}, 
 this is an efficient algorithm for approximating 
 $\uemu$ to high accuracy.

 The next issue we addressed is the method for initializing and adjusting the parameter 
 $\eta$.  This is particularly significant since the initial value of $\eta$ must be chosen
 to assure the convexity of the problems in \eqref{eq:value_function_definition}.
 Lemma \ref{lem:LL weak cvx} gives us guidance in this regard, but the necessary 
 computations to obtain a lower bound on $\eta$ can be arduous, and, in general,
 produce a wildly pessimistic lower bound. For this reason, we take a somewhat different 
 approach by proposing a variable metric strategy for solving the optimization problems
 in \eqref{eq:value_function_definition}. In this approach, we replace 
 the Hessian matrix $\nabla^2\LL_{\emu}\bg$ in
 the Newton equation
 
 \centerline{$
 G_\emu((\beta, \gamma, v), (\tbeta, \tgamma))+
 \nabla G_\emu((\beta, \gamma, v), (\tbeta, \tgamma))
 [dv, d\beta, d\gamma] =0
$}

\noindent
by the positive semi-definite approximation 

 \centerline{$
\nabla^2\LL{(\beta,\gam)} \approx \sum_{i=1}^m
S_i^T\begin{bmatrix}X_i^T\\ -Z_i^T\end{bmatrix}
\Omega_i(\gam)^{-1}
\begin{bmatrix}X_i& -Z_i\end{bmatrix}S_i
$}

\noindent
which is motivated by the expression for $\nabla^2\LL_{\emu}\bg$ given in
\eqref{eq:hess LL}. That is, we simply drop the negative semi-definite term
$-\sum_{i=1}^m\begin{bmatrix}
0&0\\ 0& \half(Z_i^T\Omega_i(\gam)^{-1}Z_i)^{\circ2}
\end{bmatrix}$.
With this modification, the subproblems we solve are strongly convex for all $\eta>0$.
Consequently, the problem of initializing $\eta$ is less problematic. Our numerical experiments indicate that the performance of the algorithm is robust with respect to 
$\eta$. For this reason, we choose an initial value for $\eta$ and 
then leave it fixed over all iterations. Our method for choosing $\eta$ is described
in \cite[Section 4, Figure 5]{Practice}. Briefly, we maximize the 
Baysian Information Criterion (BIC) over a grid of values for $\eta$. 
 The resulting BIC response curve shows 
that the method is robust with respect to the 
 choice of $\eta$ and choosing $\eta\in [1,10]$ yields 
 accurate solutions for our selected test problems. 
 Once $\eta$ is fixed the
 PGD algorithm can be applied to solve the problem \eqref{eq:relax3}
 for decreasing values of $\mu$. 

 Finally, we propose two methods for updating $\mu$. In the first,  
$\mu$ is reduced by a fixed percentage of its current estimate 
after obtaining an approximate
solution to the equation \eqref{eq:kkt}, i.e., an approximate 
KKT point for the optimization problem defining $\uemu$. We call this
method MSR3 (Algortihm \ref{alg:MSR3}).
In the second, we update $\mu$ after each interior point iteration lying
in a neighborhood of the central path, and call this 
more aggressive algorithm MSR3-fast 
(Algorithm \ref{alg:MSR3-fast}).
\medskip

{\small
\begin{algorithm}[H]
\SetAlgoLined
$\texttt{progress}\leftarrow \textbf{True}$;
$\beta^+, \tbeta^+\leftarrow\beta_0$; \quad $\gamma^+, \tgamma^+\leftarrow\gamma_0$; \quad \texttt{outer\_iter = 0} \\
\While{\texttt{outer\_iter} $<$ \texttt{max\_iter}  \ and \ $\texttt{progress}$\\}
{
$v^+ \leftarrow 1 \in \R^q$; \quad $\mu \leftarrow \frac{{v^+}^T\gamma^+}{10 q}$; \quad \texttt{inner\_iter = 0}\\
 \While{\texttt{inner\_iter} $<$ \texttt{max\_iter}  \ and \ $\|G_\emu(\beta^+, \gamma^+, v^+)\|$ $>$ \texttt{tol}   \ and  \ ($\|\beta^+ - \beta\| \geq \text{tol}$ or $\|\gamma^+ - \gamma\|  \geq \text{tol}$)\\}{
    $\beta \leftarrow \beta^+$; \quad $\gamma \leftarrow \gamma^+$; \quad $\tbeta \leftarrow \tbeta^+$; \quad $\tgamma \leftarrow \tgamma^+$ \\
    $[dv, d\beta, d\gamma] \leftarrow  \nabla G_\emu((\beta, \gamma, v), (\tbeta, \tgamma))^{-1}  G_\emu((\beta, \gamma, v), (\tbeta, \tgamma))$ \tcp*[f]{Newton Iteration}\\ 
    $\alpha \leftarrow 0.99\times\min\left(1, -\frac{\gamma_i}{d\gamma_i}, \forall i :\ d\gamma_i < 0\right)$\\
    $\beta^+ \leftarrow \beta + \alpha d\beta$; \quad $\gamma^+ = \gamma + \alpha d\gamma$; \quad  $v^+ \leftarrow v + \alpha dv$\\
    \If{$\|\gamma^+\odot v^+ - q^{-1}{\gamma^+}^Tv^+ \mathbf{1}\| > 0.5q^{-1}{v^+}^T\gamma^+$}{
    	continue \tcp*[f]{Keep doing Newton iterations}\\
    }
    \Else{ 
        $\mu \leftarrow \frac{{v^+}^T\gamma^+}{10 q}$ \tcp*[f]{Near central path}\\
    }
    \texttt{inner\_iter += 1}
 }
 $\tbeta^+ = \prox_{\alpha R}(\beta^+)$; \quad $\tgamma^+ = \prox_{\alpha R + \delta_{\R_+}}(\gamma^+)$; \quad \texttt{outer\_iter += 1}; \quad
$ \texttt{progress}=$
($\|\tbeta^+ - \tbeta\| \geq \text{tol}$ or $\|\tgamma^+ - \tgamma\| \geq \text{tol}$) 
 }
 \Return{$\tbeta^+$, $\tgamma^+$}
 \caption{\label{alg:MSR3}MSR3 (Hybrid PGD for LME feature selection)}
\end{algorithm}
}

{\small
\begin{algorithm}[H]
\SetAlgoLined
$\texttt{progress}\leftarrow \textbf{True}$; \quad \texttt{iter = 0}; \\
$\beta^+, \tbeta^+\leftarrow\beta_0$; 
\quad $\gamma^+, \tgamma^+\leftarrow\gamma_0$;  
\quad $v^+ \leftarrow 1 \in \R^q$; 
\quad  $\mu \leftarrow \frac{{v^+}^T\gamma^+}{10 q}$\\
 \While{\texttt{iter} $<$ \texttt{max\_iter}  \ and \ $\|G_\emu(\beta^+, \gamma^+, v^+)\|$ $>$ \texttt{tol}   \ and  \ \texttt{progress} \\}{
    $\beta \leftarrow \beta^+$; \quad $\gamma \leftarrow \gamma^+$; \quad $\tbeta \leftarrow \tbeta^+$; \quad $\tgamma \leftarrow \tgamma^+$ \\
    $[dv, d\beta, d\gamma] \leftarrow  \nabla G_\emu((\beta, \gamma, v), (\tbeta, \tgamma))^{-1}  G_\emu((\beta, \gamma, v), (\tbeta, \tgamma))$ 
 \quad   {\small\bf // Newton Iteration}\\ 
    $\alpha \leftarrow 0.99\times\min\left(1, -\frac{\gamma_i}{d\gamma_i}, \forall i :\ d\gamma_i < 0\right)$\\
    $\beta^+ \leftarrow \beta + \alpha d\beta$; \quad $\gamma^+ = \gamma + \alpha d\gamma$; \quad  $v^+ \leftarrow v + \alpha dv$\\
    \If{$\|\gamma^+\odot v^+ - q^{-1}{\gamma^+}^Tv^+ \mathbf{1}\| > 0.5q^{-1}{v^+}^T\gamma^+$}{
    	continue \tcp*[f]{Keep doing Newton iterations}\\
    }
    \Else{ 
        $\tbeta^+ = \prox_{\alpha R}(\beta^+)$;
        \    $\tgamma^+ = \prox_{\alpha R + \delta_{\R_+}}(\gamma^+)$; 
        \    $\mu = \frac{1}{10}\frac{{v^+}^T\gamma^+}{q}$ \tcp*[f]{Near central path} 
    }
    \texttt{progress} = ($\|\beta^+ - \beta\| \geq \text{tol}$ or $\|\gamma^+ - \gamma\|  \geq \text{tol}$ or $\|\tbeta^+ - \tbeta\| \geq \text{tol}$ or $\|\tgamma^+ - \tgamma\| \geq \text{tol}$)\\
    \texttt{iter += 1}
 }
 \Return{$\tbeta^+$, $\tgamma^+$}
 \caption{\label{alg:MSR3-fast}MSR3-fast (Accelerated hybrid PGD for LME feature selection)}
\end{algorithm}
}
\medskip




\section{Numerical results}
\label{sec:numerical}

A detailed numerical study and comparison Algorithms
\ref{alg:pgd_for_lme}, \ref{alg:MSR3}, and \ref{alg:MSR3-fast} as well as other algorithms for 
variable selection in LME models is given in \cite{Practice}.
Here we give one illustration from \cite{Practice}.
\smallskip

\noindent
{\bf Experimental Setup.} In this experiment we take the number of fixed effects $p$ and random effects $q$ to be $20$. We set $\beta = \gamma = [\frac{1}{2}, \frac{2}{2}, \frac{3}{2}, \dots, \frac{10}{2}, 0, 0, 0, \dots, 0]$, i.e. the first 10 covariates are increasingly important and the last 10 covariates are not. The data is generated as 
\[
\begin{aligned}
y_i &= X_i\beta + Z_iu_i + \varepsilon_i, \quad  \varepsilon_i \sim \NN(0, 0.3^2 I) \\
X_i &\sim \NN(0, I)^p, \quad Z_i = X_i \\ 
u_i& \sim \NN(0, \Diag{\gamma})\\ 
\end{aligned}
\]
We generated 9 groups with the sizes of $[10, 15, 4, 8, 3, 5, 18, 9, 6]$ to capture a variety of group sizes. To estimate the uncertainty bounds, each experiment is repeated 100 times.
\medskip

Table \ref{table:comparison_of_algorithms} compares the performance
of algorithms \ref{alg:pgd_for_lme}, \ref{alg:MSR3}, and \ref{alg:MSR3-fast}
for four different feature selection regularizers: L0 (the $\ell_0$-norm), 
L1 (the $\ell_1$-norm), ALASSO (adaptive LASSO \cite{Krishna2008,Lan2006, Xu2015,Lin2013, fan2014robust, pan2018simultaneous}), and SCAD 
(smoothed clipped absolute deviation \cite{Fan2012,chen2015inference,Ghosh2018}). Figure 
\ref{fig:bullying_data_fixed_feature_selection} gives a more detailed
picture of the statistical performance of the algorithms over the set
of 100 test problems.
The L1 and ALASSO regularizers are convex while the L0 and SCAD are not.
Despite the non-convexity of the L0 and SCAD regularizers, they exhibit 
superior accuracy in identifying the correct features.  
There are closed form expressions for the prox operator for all of these
regularizers \cite{Practice}. The hybrid MSR3-fast Algorithm 
\ref{alg:MSR3-fast}
is the clear winner
in terms of efficiency in that it produces highly accurate solutions in a 
tiny fraction of the time it takes Algorithms 
\ref{alg:pgd_for_lme} and \ref{alg:MSR3}.
As expected, the vanilla PGD Algorithm \ref{alg:pgd_for_lme} is the least accurate
in identifying the correct features since it is only a first-order method while
Algorithms \ref{alg:MSR3} and \ref{alg:MSR3-fast} both use higher-order 
information as well as incorporating global variational information on the
relaxed objectives $\LL_{\eta,\mu}$. The whisker plots
in Figure \ref{fig:bullying_data_fixed_feature_selection} show that although
Algorithm \ref{alg:MSR3} has a slight edge in accuracy, 
Algorithm \ref{alg:MSR3-fast} strongly
dominates both Algorithms \ref{alg:pgd_for_lme} and \ref{alg:MSR3}
in speed.

\begin{table}
\centering
\begin{tabular}{lllll}
\toprule
     & Model &    PGD &    MSR3 & MSR3-fast \\
Regularizer & Metric &        &        &       \\
\midrule
L0 & Accuracy &   0.89 &   \textbf{0.92} &  \textbf{0.92} \\
     & Time &  41.68 &  88.54 &  \textbf{0.13} \\
L1 & Accuracy &   0.73 &   \textbf{0.88} &  \textbf{0.88} \\
     & Time &  38.39 &   9.13 &  \textbf{0.13} \\
ALASSO & Accuracy &   0.88 &   \textbf{0.92} &  0.91 \\
     & Time &  34.55 &  65.19 &  \textbf{0.12} \\
SCAD & Accuracy &   0.71 &   \textbf{0.93} &  0.92 \\
     & Time &  77.62 &  84.67 &  \textbf{0.17} \\
\bottomrule
\end{tabular}

\caption{\label{table:comparison_of_algorithms} \small
Comparison of performance of algorithms measured as accuracy of selecting the correct covariates and run-time. The L0 strategy stands out 
over other standard regularizers. MSR3 improves performance significantly for all regularizers, while MSR3-fast improves convergence speed while preserving the 
accuracy of MSR3.  
}
\end{table}

\begin{figure}
    \centering
	\includegraphics[width=1.0\textwidth]{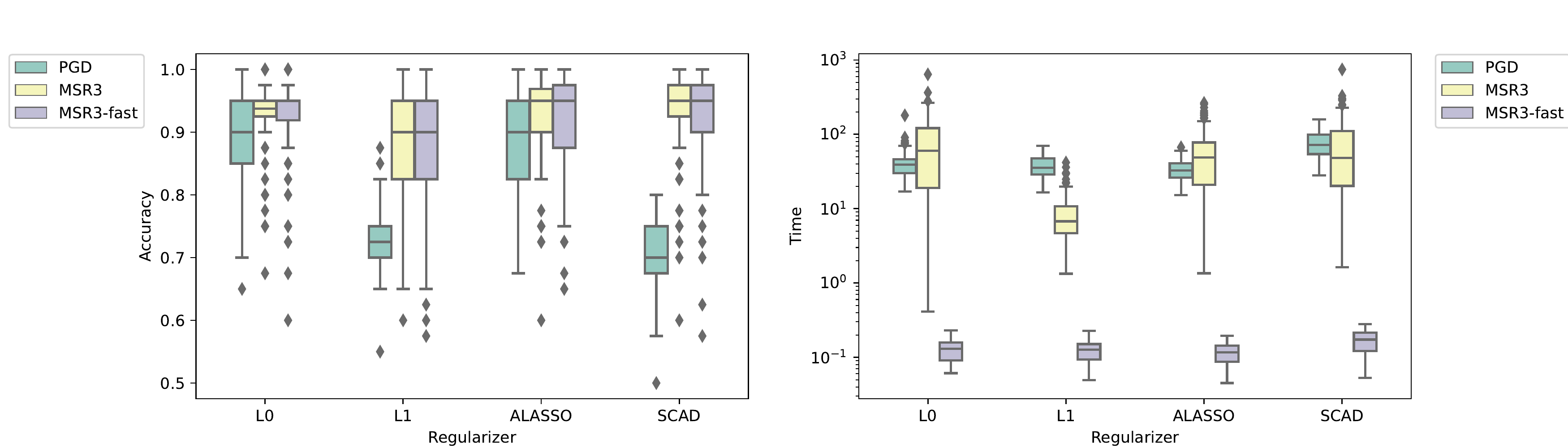}
	\caption{\label{fig:bullying_data_fixed_feature_selection} \small
	Feature selection accuracy and execution time in seconds for PGD and MSR3 with various regularizers. MSR3-Fast has the same accuracy as 
	MSR3 and significantly decreases computation time.}
\end{figure}

\bibliographystyle{amsplain}
\bibliography{bibliography}


\appendix

\section{Existence of Minimizers (Theorems~\ref{thm:basic existence} and~\ref{thm:basic existence2})}\label{adx:basic existence}

The key tool to prove existence of minimizers for both the likelihood and the penalized likelihood 
is the function $\map{f}{\R^n\times\bS_{++}^n}{\R}$ given by
\eq{\label{eq:key1}
f(r,M):=\half[r^TM^{-1}r+\ln |M|]\ .
}
If $M=U\Diag{\mu}U^T$ is the eigenvalue decomposition for $M$ where
$U^TU=I$, and $\tr=U^Tr$, then 
\eq{\label{eq:key2}
f(r,M)=\half\left[\sum_{i=1}^n \frac{\tr_i^2}{\mu_i}+\ln(\mu_i)\right]. 
}
For $\rho>0$, observe that
\(
\frac{\rho^2}{\omega}+\ln(\omega)
\) 
is greater that both $\ln(\omega)$ and $1+2\ln(\rho)$ for all $\omega>0$.
Therefore, using the facts $\mu_\mmax(M)=\norm{M}$ and $\norm{\tr}_\infty\ge(\norm{\tr}/\sqrt{n})
=(\norm{r}/\sqrt{n})$, we have
\eq{\label{eq:eigbd}
f(r,M)&\ge\half\sum_{i=1}^n \max\{\ln\mu_i,\,1+2\ln|\tr_i|\}
\\ &\ge
\max\{1\!+\!2\ln(\norm{r}/\sqrt{n})\!+\!\frac{n\!-\!1}{2}\ln\mu_\mmin(M),\
\ln\norm{M}\!+\!\frac{n\!-\!1}{2}\ln\mu_\mmin(M)\}
\\ &
\ge\max\{\ln(\norm{r}^2/n),\ln\norm{M}\} + \frac{n\!-\!1}{2}\ln\mu_\mmin(M),
}
where $\mu_\mmin(M)$ and $\mu_\mmax(M)$ are the smallest
and largest eigenvalue of $M$, respectively.
We have the following result due to \cite{zheng2021trimmed}
modified slightly with an independent proof. 
\begin{lemma}[Level Compactness of $f$]
\label{lem:levelcompact1}\cite[Theorem 1]{zheng2021trimmed}
Let $f$ be as given in \eqref{eq:key1}. Then, given $\rho\in\R$ and $\alf>0$, the set 
\[
\DD_{\rho,\alf}:=
\lset{(r,M)\in \R^n\times \bS_{++}^n}{f(r,M)\le \rho\ \text{and}\
\mu_\mmin(M)\ge\alf}
\] 
is compact, where $\mu_\mmin(M)$ and $\mu_\mmax(M)$ are the smallest
and largest eigenvalue of $M$, respectively.
\end{lemma}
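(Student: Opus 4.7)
The plan is to establish compactness of $\DD_{\rho,\alf}$ in the finite-dimensional space $\R^n\times\bS^n$ by showing the set is both bounded and closed, then invoking Heine-Borel.

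For boundedness, I would lean directly on the lower bound \eqref{eq:eigbd} already derived in the excerpt, namely
\[
f(r,M)\ \ge\ \max\{\ln(\norm{r}^2/n),\,\ln\norm{M}\}+\tfrac{n-1}{2}\ln\mu_\mmin(M).
\]
On $\DD_{\rho,\alf}$ we have the two constraints $f(r,M)\le\rho$ and $\mu_\mmin(M)\ge\alf$, and the latter lets me replace $\tfrac{n-1}{2}\ln\mu_\mmin(M)$ by the constant $\tfrac{n-1}{2}\ln\alf$ on the right-hand side. Rearranging gives
\[
\max\{\ln(\norm{r}^2/n),\,\ln\norm{M}\}\ \le\ \rho-\tfrac{n-1}{2}\ln\alf,
\]
which simultaneously bounds $\norm{r}$ and $\norm{M}$ by explicit constants. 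This is the whole content of boundedness; the clever work was already done in deriving \eqref{eq:eigbd}.

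For closedness, I would take a sequence $(r^k,M^k)\subset\DD_{\rho,\alf}$ converging to some limit $(r,M)\in\R^n\times\bS^n$ and show $(r,M)\in\DD_{\rho,\alf}$. The eigenvalue $\mu_\mmin(\cdot)$ is continuous on $\bS^n$ (as a min of Rayleigh quotients over the unit sphere, or by Weyl's inequality), so $\mu_\mmin(M)=\lim_k\mu_\mmin(M^k)\ge\alf>0$. In particular $M\in\bS^n_{++}$, hence $M^{-1}$ exists and depends continuously on $M$ in this neighborhood. Consequently $f$ is continuous at $(r,M)$, which yields $f(r,M)=\lim_k f(r^k,M^k)\le\rho$. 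Both membership conditions are preserved in the limit.

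Neither step presents a real obstacle since the hard analytical work is already packaged in the estimate \eqref{eq:eigbd}. The only subtle point worth flagging is that one must rule out the possibility of the limiting $M$ being merely positive semidefinite (where $f$ would blow up or be undefined); this is precisely what the uniform lower bound $\mu_\mmin(M^k)\ge\alf$ together with continuity of $\mu_\mmin$ prevents. Without that uniform bound, $\DD_{\rho,\alf}$ need not be closed, which is why the hypothesis $\alf>0$ is essential.
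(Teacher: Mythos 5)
Your proof is correct and follows essentially the same route as the paper: boundedness comes directly from the estimate \eqref{eq:eigbd} together with the constraint $\mu_\mmin(M)\ge\alf$, and closedness from continuity of $f$ (and of $\mu_\mmin$) on the region where the smallest eigenvalue is bounded away from zero. Your treatment is slightly more explicit than the paper's (a direct bound instead of a divergent-sequence contradiction, and a spelled-out closedness argument), but the substance is the same.
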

\begin{proof}
If $\DD_{\rho,\alf}=\emptyset$, it is compact so we assume it is not empty.
Since $f$ is continuous on $\DD_{\rho,\alf}$, we need only show that this set 
is bounded. The boundedness of this set follows immediately from \eqref{eq:eigbd}. Indeed, if
$\{(r^k,M_k)\}\subset \R^n\times \bS_{++}^n$ 
diverges in norm then,  
without loss of generality, either $\norm{r^k}\rightarrow \infty$
or $\mu_\mmax(M)=\norm{M_k}\rightarrow\infty$, or both in which case 
\eqref{eq:eigbd} tells us that $f(r^k,M_k)\rightarrow\infty$.
\end{proof}
Observe that
\[
\LL_{ML}(\beta,\Gam)=f( r(\beta), \Omega(\Gam)) 
\]
where $\map{r}{\R^p}{n}$ and
$\map{\Omega}{\R^q}{\bS^n}$ are the affine transformations
\[\begin{aligned}
 r(\beta)&:=X\beta-Y,\qquad\text{and}
\\
\Omega(\Gam)&:=
\Diag{\Lambda_1+Z_1\Gam Z_1^T,\dots,
\Lambda_m+Z_m\Gam Z_m^T}.
\end{aligned}\]
For 
$i=1,\dots,m$, define
\[
\omega^i_\mmin:=
\mu_{\text{min}}(\Lambda_i)+\mu_\mmin(\Gam) 
\sig^2_{\text{min}}(Z_i)\quad \mbox{ and }
\quad
\omega_\mmin:=
\min_{i=1,\dots,m}\omega^i_\mmin,
\]
where $\mu_{\text{min}}(\Psi)$ 
and $\sig_{\text{min}}(\Phi)$ 
are the smallest eigenvalues and singular-values of 
$\Psi$ and $\Phi$,
respectively. 
By \cite[Theorem 3.1]{ABBP2021},
\eq{\label{eq:eig1}
0<\talf:=\mu_{\text{min}}(\Lambda)\le \omega_{\text{min}}
\le \mu_{\text{min}}(\Omega(\Gam))\quad \forall\ \Gam\in\bS_{+}^{q}.
}

\paragraph{\bf Proof for Theorem~\ref{thm:basic existence}}

The bound \eqref{eq:eig1}  
tell us that 
\[
\hat\DD:=\lset{(r,\Omega(\Gam))}{r\in\R^n,\, \Gam\in\bS_{+}^{q}
\ \text{and}\ f(r,\Omega(\Gam))\le\rho}
\subset\DD_{\talf,\rho}.
\]
In particular, 
$\LL$ is bounded below by \eqref{eq:eigbd}.
Hence there exists a sequence 
$\{(\beta^k,\Gam^k)\}\subset\R^p\times\bS_{+}^{q}$
such that 
\[
\LL_{ML}(\beta^k,\Gam_k)\downarrow
\inf_{\beta\in\R^p,\Gam\in\bS_{+}^{q}}\LL_{ML}(\beta,\Gam).
\]
Let $\rho=\LL_{ML}(\beta^0,\Gam_0)$.
Since $f$ is continuous on $\hat\DD\subset\DD_{\talf,\rho}$, 
$\DD_{\talf,\rho}$ is compact by Lemma \ref{lem:levelcompact1},
and both $\im{X}$ and $\im{\Omega}$ are closed,
%
%
with no loss in generality there is a
$(\bxi,\overline\Omega)\in\im{r}\times\im{\Omega}
\cap \DD_{\talf,\rho}$ 
such that
$(r(\beta^k),\Omega(\Gam_k))\rightarrow(\bxi,\overline\Omega)$. 
Since $(\bxi,\overline\Omega)\in \im{r}\times\im{\Omega}$,
there is a $(\bbeta,\overline\Gam)\in\R^p\times\bS^q_+$ such that
$(\bxi,\overline\Omega)=(r(\bbeta),\Omega(\overline\Gam))$.
In addition, since
$0<\talf\le \mu_\mmin(\Omega(\Gam))$ for all
$\Gam\in\bS_{+}^{q}$, we have $\LL_{ML}$ is lsc at $(\bbeta,\overline\Gam)$ 
telling us that $\LL(\bbeta,\overline\Gam)
=\inf_{\beta\in\R^p,\Gam\in\bS_{+}^{q}}\LL(\beta,\Gam)$.

\paragraph{\bf Proof of Theorem~\ref{thm:basic existence2}}

Define the affine transformations
$\map{\hOmega}{\R^q}{\bS^n}$ and $\map{\hOmega_i}{\R^q}{\bS^{n_i}}$ by
\begin{equation}
\hOmega(\gam):=\Omega(\Diag{\gam})\ \ \text{ and }\ \ 
\hOmega_i(\gam):=\Omega_i(\Diag{\gam})\quad i=1,\dots,m.
\end{equation}

The existence of a solution follows immediately once the level compactness
of $\L+\hR$ is establinshed. To this end
observe that
$\LL(\beta,\gam)=\LL_{ML}(\beta, \Diag{\gamma})=f( r(\beta), \hOmega(\gamma))$ 
and so 
\eqref{eq:eigbd} and  \eqref{eq:eig1} tell us that
\(
\LL\bg\ge  \frac{n\!+\!1}{2}\ln\talf.
\)
Since $\hR$ is level compact, it is lower bounded.
Therefore, $\LL +\hR$ is bounded below.
Let $\rho\in\R$ and $\{\bgk\}
\subset
\lset{\bg}{\LL\bg +\hR\bg\le\rho}$. 
We need to show that
$\{\bgk\}$ is bounded.
If $\norm{\bgk}\rightarrow \infty$,
then $\hR\bgk\rightarrow \infty$. Since $\LL\bgk +\hR\bgk\le\rho$, we must
have $\LL\bgk\rightarrow -\infty$. But $\LL$ is bounded below, hence 
$\{\bgk\}$must be bounded, and so $\LL +\hR$ is level compact.

\section{Derivatives of $\LL$}
\label{appendix:derivatives_of_lmm}
One can show that
\eq{
	\label{eq:all_derivatives}
\nabla_\beta\LL(\beta, \gamma) & = \sum_{i=1}^m X_i^T\Omega_i^{-1}(X_i\beta-Y_i)
=X^T\Omega^{-1}  r(\beta)
	\\
\nabla_\gamma\LL(\beta, \gamma) & = 
\half \sum_{i = 1}^m  \diag{{Z_i}^T\Omega_i^{-1}Z_i} - ({Z_i}^T\Omega_i^{-1}(X_i\beta-Y_i))^{\circ 2}
    \\
\nabla_{\beta\beta}\LL(\beta, \gamma) 
& =  
\sum_{i = 1}^m X_i^T\Omega_i^{-1}X_i 
=X^T\Omega^{-1} X
	\\
\nabla_{\beta\gamma}\LL(\beta, \gamma) & = 
-\sum_{i = 1}^m \Diag{Z_i^T\Omega_i^{-1}
(X_i\beta-Y_i)}Z_i^T\Omega_i^{-1}X_i 
	\\
\nabla_{\gamma\gamma}\LL(\beta, \gamma) 
& 
	\!=\!\!\sum_{i = 1}^m\! \left(Z_i^T\Omega_i^{-1}
(X_i\beta\!-\!Y_i)(X_i\beta\!-\!Y_i)^T
	\Omega_i^{-1}Z_i\right)
	\!\circ\!(Z_i^T\Omega_i^{-1}Z_i)
	\!-\!\mbox{\small$\half$}(Z_i^T\Omega_i^{-1}Z_i)^{\circ 2}
,}
where the final representation can be rewritten
using the fact that
$(yz^T)\circ A=\Diag{y}A\Diag{z}$ for all
$ y\in\R^m,z\in\R^n,\ A\in\R^{m\times n}$
\cite{HJ85}.
These formulas 
imply that
\begin{equation}\label{eq:hess LL}
\nabla^2\LL{(\beta,\gam)}=\sum_{i=1}^m
S_i^T\begin{bmatrix}X_i^T\\ -Z_i^T\end{bmatrix}
\Omega_i(\gam)^{-1}
\begin{bmatrix}X_i^T& -Z_i^T\end{bmatrix}S_i
-\begin{bmatrix}
0&0\\ 0& \half(Z_i^T\Omega_i(\gam)^{-1}Z_i)^{\circ2}
\end{bmatrix},
\end{equation}
where 
\[
S_i:=\begin{bmatrix}
I_q&0\\ 0&\Diag{Z_i^T\Omega_i(\gam)^{-1}(X_i\beta-Y_i)}
\end{bmatrix}.
\]
These formulas are also derived in \cite{Lindstrom1988}.

\end{document}